\title{An Improved Deterministic Algorithm for the Online Min-Sum Set Cover Problem} 
\author{Mateusz Basiak}{University of Wrocław}{}{}{}
\author{Marcin Bienkowski}{University of Wrocław}{}{}{}
\author{Agnieszka Tatarczuk}{University of Wrocław}{}{}{}
\authorrunning{M. Basiak, M. Bienkowski and A. Tatarczuk} 
\keywords{min-sum set cover, list update, derandomization, online algorithms, competitive analysis} 
\newtheorem{fact}[theorem]{Fact}
\crefname{fact}{Fact}{Facts}
\newcommand{\MSSC}{\textnormal{\textsc{Mssc}}\xspace}
\newcommand{\OPT}{\textnormal{\textsc{Opt}}\xspace}
\newcommand{\OFF}{\textnormal{\textsc{Off}}\xspace}
\newcommand{\ALG}{\textnormal{\textsc{Alg}}\xspace}
\newcommand{\LM}{\textnormal{\textsc{Lma}}\xspace}
\newcommand{\DLM}{\textnormal{\textsc{Dlm}}\xspace}
\newcommand{\MTF}{\textnormal{\textsc{Mtf}}\xspace}
\newcommand{\MTFB}{\textnormal{\textsc{Mtfb}}\xspace}
\newcommand{\STATIC}{\ensuremath{\textsc{Fixed}}\xspace}
\newcommand{\I}{\ensuremath{\mathcal{I}}}
\newcommand{\J}{\ensuremath{\mathcal{J}}}
\newcommand{\C}{\ensuremath{\mathcal{C}}}
\newcommand{\fetch}{\normalfont{\textsc{fetch}}}
\newcommand{\E}{\mathbf{E}}
\newcommand{\fa}{\ensuremath{\alpha}}
\newcommand{\fb}{\ensuremath{\beta}}
\newcommand{\fc}{\ensuremath{\gamma}}
\newcommand{\fk}{\ensuremath{\kappa}}
\newcommand{\fv}{\ensuremath{\varepsilon}}
\newcommand{\pos}{\ensuremath{\pi}}
\newcommand{\U}{\ensuremath{\mathcal{U}}}
\newcommand{\suchthat}{\ensuremath{\;\mid\;}}
\begin{document}

\maketitle

\begin{abstract}
We study the online variant of the Min-Sum Set Cover (\MSSC) problem, a
generalization of the well-known list update problem. In the \MSSC problem, an
algorithm has to maintain the time-varying permutation of the list of $n$
elements, and serve a sequence of requests $R_1, R_2, \dots, R_t, \dots$. Each
$R_t$ is a subset of elements of cardinality at most $r$. For a requested set
$R_t$, an online algorithm has to pay the cost equal to the position of the
first element from $R_t$ on its list. Then, it may arbitrarily permute its list,
paying the number of swapped adjacent element pairs. 

We present the first \emph{constructive} deterministic algorithm for this
problem, whose competitive ratio does not depend on $n$. Our algorithm is
$O(r^2)$-competitive, which beats both the \emph{existential} upper bound of
$O(r^4)$ by Bienkowski and Mucha~[AAAI '23] and the previous constructive bound
of $O(r^{3/2} \cdot \sqrt{n})$ by Fotakis et al.~[ICALP '20]. Furthermore, we
show that our algorithm attains an asymptotically optimal competitive ratio of
$O(r)$ when compared to the best fixed permutation of elements. 
\end{abstract}

\section{Introduction}

In the online Min-Sum Set Cover (\MSSC) problem~\cite{FoKKSV20,FoLiPS20}, an
algorithm has to maintain an~ordered list of elements. During runtime, an online
algorithm is given a sequence of requests $R_1, R_2, \dots, R_t, \dots$,
each being a subset of elements. When a request $R_t$ appears, an~algorithm
first pays the cost equal to the position of the first element of $R_t$ in its
list. Next, it may arbitrarily reorder the list, paying the number of swapped
adjacent elements. 

The online \MSSC finds applications in e-commerce, for maintaining an ordered
(ranked) list of all shop items (elements) to be presented to new shop
customers~\cite{DeGoMM20}. This so-called cold-start list can be updated to
reflect the preferences of already known users (where $R_t$ corresponds to the set
of items that are interesting for user $t$). Each user should be able to find at
least one interesting item close to the list beginning, as otherwise, they must
scroll down, which could degrade their overall experience. Similar phenomena
occur also in ordering results from a web search for a given
keyword~\cite{DwKuNS01,AgBrDu18}, or ordering news and advertisements.

The problem is also theoretically appealing as the natural generalization of the
well-known list update problem~\cite{SleTar85}, where all $R_t$ are singletons
(see~\cite{Kamali16} and references therein).

%%%%%%%%%%%%%%%%%%%%%%%%%%%%%%%%%%%%%%%%%%%%%%%%%%%%%%%%%%%%%%%%%%%%%%%%%%%%%%%%

\subsection{Model and notation}

For any integer $\ell$, let $[\ell] = \{1, \dots, \ell\}$. 
We use $\U$ to denote the universe of $n$ elements. By permutation $\pi$ of $\U$, 
we understand a~mapping $\U \to [n]$ (from items to their list positions).
Thus, for an element $z \in \U$, $\pi(z)$ is its 
position on the list. 

An input~$\I$ to the online \MSSC problem consists of an initial permutation
$\pi_0$ of $\U$ and a~sequence of $m$ sets $R_1, R_2, \dots, R_m$. In step $t$,
an online algorithm \ALG is presented a request~$R_t$ and is charged the
\emph{access cost} $\min_{z \in R_t} \pi_{t-1}(z)$. Then, \ALG chooses a new permutation
$\pi_t$ (possibly $\pi_t = \pi_{t-1}$) paying \emph{reordering cost}
$d(\pi_{t-1}, \pi_{t})$, equal to the minimum number of swaps of adjacent
elements necessary to change permutation $\pi_{t-1}$ into $\pi_t$.\footnote{The
value $d(\pi_{t-1}, \pi_{t})$ is also equal to the number of inversions between
$\pi_{t-1}$ and $\pi_t$, i.e., number of unordered pairs $(x,y)$ such that
$\pi_{t-1}(x) < \pi_{t-1}(y)$ and $\pi_{t}(x) > \pi_{t}(y)$.}

We emphasize that the choice of~$\pi_t$
made by \ALG has to be performed without the knowledge of future sets $R_{t+1},
R_{t+2}, \dots$ and also without the knowledge of the sequence length $m$. 
We use $r$ to denote the maximum cardinality of requested sets $R_t$.

%%%%%%%%%%%%%%%%%%%%%%%%%%%%%%%%%%%%%%%%%%%%%%%%%%%%%%%%%%%%%%%%%%%%%%%%%%%%%%%%

\subsection{Benchmarks}

In the following, for an input $\I$ and an~algorithm~$A$, we use $A(\I)$ to
denote the total cost of $A$ on $\I$. To measure the effectiveness of online
algorithms, we use the standard notion of competitive ratio, but we generalize
it slightly, for more streamlined definitions of particular scenarios.

We say that an online algorithm \ALG is $c$-competitive \emph{against class $\C$
of offline algorithms} if there exists a constant $\xi$, such that for any input
$\I$ and any offline algorithm $\OFF \in \C$, it holds that $\ALG(\I) \leq c
\cdot \OFF(\I) + \xi$. If $\xi = 0$, then \ALG is called \emph{strictly}
competitive. The competitive ratio of $\ALG$ against class $\C$ is the infimum
of values of $c$, for which \ALG is $c$-competitive against this class. For
randomized algorithms, we replace the cost $\ALG(\I)$ with its expected value
$\E[\ALG(\I)]$. We consider three scenarios.

\subparagraph{Dynamic scenario.}

In the dynamic scenario, the considered class $\C$ contains all possible offline
algorithms, in particular those that adapt their permutation dynamically during
runtime. This setting is equivalent to the traditional competitive
ratio~\cite{BorEl-98}, where an online algorithm is compared to the optimal
offline solution \OPT. This scenario is the main focus of this paper.

\subparagraph{Static scenario.}

Previous papers focused also on a simpler \emph{static scenario}, where the
considered class of algorithms $\STATIC$ contains all possible $n!$ fixed
strategies: an algorithm from class \STATIC starts with its list ordered
according to a fixed permutation and never changes it~\cite{FoKKSV20}. (In this
scenario, the starting permutation of an~online algorithm and an~offline
solution are different.) Note that such an offline algorithm incurs no
reordering cost, and pays access costs only. It is worth mentioning that there
exist inputs $\I$, for which $\min_{A \in \STATIC} A(\I) = \Omega(n) \cdot
\OPT(\I)$~\cite{FoKKSV20}. 

\subparagraph{Learning scenario.}

The static scenario can be simplified further, by assuming that reordering
incurs no cost on \ALG. We call such a setting \emph{learning scenario}. Clearly,
the competitive ratios achievable in the learning scenario are not larger than
those for the static scenario, which are in turn not larger than those in 
the dynamic scenario.

%%%%%%%%%%%%%%%%%%%%%%%%%%%%%%%%%%%%%%%%%%%%%%%%%%%%%%%%%%%%%%%%%%%%%%%%%%%%%%%%

\subsection{Previous results}

Below, we discuss known results for the \MSSC problem in the three scenarios
described above (dynamic, static, and learning). Furthermore, we make a distinction
between ratios achievable for polynomial-time algorithms and algorithms whose
runtime per request is not restricted. The lower and upper bounds described
below are also summarized in \autoref{tab:resultsTable}.

\subparagraph{Lower bounds.}

Feige et al.~\cite{FeLoTe04} studied the \emph{offline} variant of \MSSC (where
all~$R_t$'s are given upfront and an algorithm has to compute a fixed
permutation minimizing the access cost). They show that unless $\textsf{P} =
\textsf{NP}$, no offline algorithm can achieve an approximation ratio better
than $4$. This result implies a lower bound of $4$ on the competitive ratio of
any polynomial-time online algorithm (assuming $\textsf{P} \neq \textsf{NP}$) as such
a solution can be used to solve the offline variant as well. We note that
$4$-approximation algorithms for the offline variant are known as
well~\cite{FeLoTe04,BaBHST98}.

The online version of \MSSC was first studied by Fotakis et al.~\cite{FoKKSV20}. They
show that no deterministic algorithm can achieve a ratio better than 
$\Omega(r)$ even in the learning scenario. This yields the same lower bound for 
the remaining scenarios as well.

%%%%%%%%%%%%%%%%%%%%%%%%%%%%%%%%%%%%%%%%%%%%%%%%%%%%%%%%%%%%%%%%%%%%%%%%%%%%%%%%

\newcommand{\minitab}[2][l]{\begin{tabular}{#1}#2\end{tabular}}
\begin{table}[t]
    %\begin{center}
        \centering
        %\scalebox{1.3}{
        \begin{tabular}{|c|c|c|c|c|c|}
            \cline{3-6}
            \multicolumn{2}{c|}{\multirow[c]{2}[0]{*}[0pt]{}}& \multicolumn{2}{c|}{randomized} & \multicolumn{2}{c|}{deterministic}\\
            \cline{3-6}
            \multicolumn{2}{c|}{}&LB&UB&LB&UB\\
            \cline{3-6}
            \hline
            \multirow[c]{2}[0]{*}[0pt]{learning}&unrestr. & $1$ & $1+\fv$  & $\Omega(r)$~\cite{FoKKSV20} & $O(r)$ \\
            \cline{2-6}
            &poly-time& $4$ \cite{FeLoTe04} & $11.713$ \cite{FoLiPS20} & $\Omega(r)$ & $O(r)$ \cite{FoLiPS20} \\
            \hline 
            \hline
            \multirow[c]{3}[0]{*}[0pt]{static}&unrestr. & $1$ & $1+\fv$ \cite{BluBur00} & $\Omega(r)$ & $O(r)$ \cite{FoKKSV20} \\
            \cline{2-6}
            &\multirow[c]{2}[0]{*}[0pt]{poly-time}
            &\multirow[c]{2}[0]{*}[0pt]{$4$} 
            &$O(r^2)$
            &\multirow[c]{2}[0]{*}[0pt]{$\Omega(r)$}
            & $\exp(O(\sqrt{\log n \cdot \log r}))$ \cite{FoKKSV20} \\
            & & & $\bm{O(r)}$ \textbf{(\autoref{thm:DLMSO})} & & $\bm{O(r)}$ \textbf{(\autoref{thm:DLMSO})} \\
            \hline 
            \hline
            \multirow[c]{4}[0]{*}[0pt]{dynamic}
            &\multirow[c]{2}[0]{*}[0pt]{unrestr.}
            &\multirow[c]{2}[0]{*}[0pt]{$1$}
            &\multirow[c]{2}[0]{*}[0pt]{$O(r^2)$}
            &\multirow[c]{2}[0]{*}[0pt]{$\Omega(r)$}
            &$O(r^4)$~\cite{BieMuc23} \\
            & & & & & $\bm{O(r^2)}$ \textbf{(\autoref{thm:DLMDO})} \\
            \cline{2-6}
            &\multirow[c]{2}[0]{*}[0pt]{poly-time}
            &\multirow[c]{2}[0]{*}[0pt]{$4$} 
            &\multirow[c]{2}[0]{*}[0pt]{$O(r^2)$ \cite{BieMuc23}}
            &\multirow[c]{2}[0]{*}[0pt]{$\Omega(r)$}
            &$O(r^{3/2} \cdot \sqrt{n})$ \cite{FoKKSV20}  \\
            & & & & & $\bm{O(r^2)}$ \textbf{(\autoref{thm:DLMDO})} \\
            \hline 
        \end{tabular}
        %}
        \caption{Known lower and upper bounds for the online~\MSSC problem for three scenarios 
        (dynamic, static and learning), for polynomial-time and computationally unrestricted
        algorithms. 
        Unreferenced results are trivial consequences of other results.
        The ratios proved in this paper are in bold. }
    %\end{center}
    \label{tab:resultsTable}
\end{table}

%%%%%%%%%%%%%%%%%%%%%%%%%%%%%%%%%%%%%%%%%%%%%%%%%%%%%%%%%%%%%%%%%%%%%%%%%%%%%%%%

\subparagraph{Asymptotically tight upper bounds.}

For the static scenario, the randomized $(1+\fv)$-competitive solution (for any
$\fv > 0$) follows by combining multiplicative weight updates~\cite{LitWar94,ArHaKa12} 
with the techniques of Blum and
Burch~\cite{BluBur00} designed for the metrical task systems. This approach has
been successfully derandomized by Fotakis et al.~\cite{FoKKSV20}, who gave
a~deterministic solution with an asymptotically optimal ratio of $O(r)$. These
algorithms clearly also work in the learning scenario. However, in both 
scenarios, they require exponential time as they keep track
of access costs for all possible $n!$ permutations. 

Fotakis et al.~\cite{FoLiPS20}~showed that in the learning scenario,
one can maintain a sparse representation of all permutations and achieve asymptotically
optimal results that work in polynomial time: a randomized $O(1)$-competitive algorithm 
and a deterministic \mbox{$O(r)$-competitive} one.

\subparagraph{Non-tight upper bounds.}

Much of the effort in the previous papers was devoted to creating algorithms for
the dynamic scenario with low competitive ratios. For $r = 1$, a simple
\textsc{Move-To-Front} policy that moves the requested element to the first
position is $O(1)$-competitive Perhaps surprisingly, however, the competitive
ratios of many of its natural generalizations were shown to be not better than
$\Omega(n)$~\cite{FoKKSV20}.

Fotakis et al.~\cite{FoKKSV20} gave an online deterministic $O(r^{3/2} \cdot
\sqrt{n})$-competitive algorithm \textsc{Move-All-Equally} (\textsc{Mae}) and
showed that its analysis is almost tight. They provided a~better bound for the
performance of \textsc{Mae} in the static scenario: in such a setting its
competitive ratio is $\exp(O(\sqrt{\log n \cdot \log r}))$~\cite{FoKKSV20}.

For randomized solutions, this result was improved by Bienkowski and Mucha
\cite{BieMuc23}, who gave an $O(r^2)$-competitive \emph{randomized} algorithm \LM
(for the dynamic scenario). Their analysis holds also against so-called
adaptive-online adversaries, and therefore, by the reduction of~\cite{BeBKTW94},
it implies the \emph{existence} of a~deterministic $O(r^4)$-competitive
algorithm. While using the techniques of Ben-David et~al.~\cite{BeBKTW94},
the construction of such an algorithm is possible, it was not done explicitly, and
furthermore, a straightforward application of these techniques would lead to
a~huge running time.

%%%%%%%%%%%%%%%%%%%%%%%%%%%%%%%%%%%%%%%%%%%%%%%%%%%%%%%%%%%%%%%%%%%%%%%%%%%%%%%%

\subsection{Our contribution}

In \autoref{sec:algorithm}, we present the first constructive deterministic
algorithm whose competitive ratio in the dynamic scenario is a function of $r$
only (and does not depend on $n$). Our algorithm, dubbed
\textsc{Deterministic-And-Lazy-Move-To-Front} (\DLM), runs in polynomial time,
and we analyze its performance both in static and dynamic scenarios. 

In the static scenario, studied in \autoref{sec:static}, \DLM attains the
optimal competitive ratio of~$O(r)$, improving over the $\exp(O(\sqrt{\log n
\cdot \log r}))$-competitive solution by Fotakis et al.~\cite{FoKKSV20} and
matching $O(r)$ bound achieved by the exponential-time algorithm
of~\cite{FoKKSV20}.

In the dynamic scenario, studied in \autoref{sec:dynamic}, we show that \DLM is
$O(r^2)$-competitive. As $r \leq n$, this bound is always better than the
existing non-constructive upper bound of $O(r^4)$~\cite{BieMuc23} and the
polynomial-time upper bound of  $O(r^{3/2} \cdot \sqrt{n})$~\cite{FoKKSV20}. Our
analysis is asymptotically tight: in \autoref{sec:lower}, we show that the ratio
of $O(r^2)$ is best possible for the whole class of approaches that includes 
\DLM.

Finally, as the learning scenario is not harder than the static one, \DLM is
$O(r)$-competitive also there. While an upper bound of $O(r)$ was already known
for the learning scenario~\cite{FoLiPS20}, our algorithm uses a vastly
different, combinatorial approach and is also much faster.

Our deterministic solution is inspired by the ideas for the randomized algorithm 
\LM of~\cite{BieMuc23} and can be seen as its derandomization, albeit 
with several crucial differences. 
\begin{itemize}
\item We simplify their approach as we solve the \MSSC problem directly,
while they introduced an intermediate exponential caching problem. 
\item We update item budgets differently, which allows us to obtain an optimal ratio 
for the static scenario ($\LM$ is not better in the static scenario 
than in the dynamic one). 
\item 
Most importantly, \cite{BieMuc23} uses randomization to argue that \LM makes bad
choices with a~small probability. In this context, bad choices mean moving
elements that \OPT has near the list front towards the list tail in the 
solution of \LM. In the deterministic approach, we obviously cannot prove the same
claim, but we show that it holds on the average. Combining this claim
with the amortized analysis, by ``encoding'' it in the additional potential 
function $\Psi$, is the main technical contribution of our paper.
\end{itemize}

\section{Our algorithm DLM} 
\label{sec:algorithm}

%We assume that the number of elements~$n$ is~$2^h - 1$, where $h \geq 1$ is an
%integer. This is without loss of generality: it is always possible to add dummy
%elements that are never in any requested set, so that $n$ is of this form but
%remains asymptotically the same. These elements are kept by \OPT~at its list
%end, and thus they do not increase its cost. 

\DLM maintains a budget $b(z)$ for any element $z \in \U$. At the beginning of
an input sequence, all budgets are set to zero.

In the algorithm description, we skip step-related subscripts when it does not
lead to ambiguity, and we simply use $\pos(z)$ to denote the \emph{current}
position of element $z$ in the permutation of \DLM. 

\SetAlgorithmName{Routine}{routine}{List of routines}
\begin{algorithm}[tb]
\caption{\fetch(z), where $z$ is any element}
\label{alg:fetch}
\For{$i = \pos(z), \dots, 3, 2$}{
    \text{swap elements on positions $i$ and $i-1$} \\
}
$b(z) \gets 0$
\end{algorithm} 

At certain times, \DLM moves an element $z$ to the list front. It does so using a
straightforward procedure \fetch(z) (cf.~Routine~\ref{alg:fetch}). It uses
$\pos(z)-1$ swaps that move $z$ to the first position, and increment the positions
of all elements that preceded $z$. Next, it resets the budget of $z$ to zero. 

Assume now that \DLM needs to serve a request $R = \{x, y_1, y_2, \dots, y_{s-1}
\}$ (where $s \leq r$ and $\pos(x) < \pos(y_i)$ for all $y_i$). Let $\ell =
\pos(x)$. \DLM first executes routine $\fetch(x)$. Afterward, it performs a lazy
counterpart of moving elements $y_i$ towards the front: it increases their
budgets by $\ell / s$. Once a budget of any element reaches or exceeds its
current position, \DLM fetches it to the list front. The pseudocode of \DLM on
request $R$ is given in \Cref{alg:dlma}.

\SetAlgorithmName{Algorithm}{algorithm}{List of algorithms}
\begin{algorithm}[tb]
\caption{A single step of \textsc{Deterministic-Lazy-Move-All-To-Front} (\DLM)\\
\textbf{Input:} request $R = \{x, y_1, y_2, \dots, y_{s-1} \}$, where $s \leq r$ and
$\pos(x) \leq \pos(y_i)$ for $i \in [s-1]$, current permutation $\pi$ of elements} 
\label{alg:dlma}
$\ell \gets \pos(x)$ \label{line:dlma_first}\\
\textbf{execute} $\fetch(x)$ \\
\For{$i = 1, 2, \dots, s-1$}{
    $b(y_i) \leftarrow b(y_i) + \ell / s$ \label{line:dlma_budget_increase}
}
\While{\textnormal{exists $z$ such that $b(z) \geq \pos(z)$}}{  \label{line:dlma_last_line_2}
    \textbf{execute} $\fetch(z)$  \label{line:dlma_last_line_3}
}
\end{algorithm} 

%%%%%%%%%%%%%%%%%%%%%%%%%%%%%%%%%%%%%%%%%%%%%%%%%%%%%%%%%%%%%%%%%%%%%%%%%%%%%%%%%%%%%%%%%%%%%%%%
%%%%%%%%%%%%%%%%%%%%%%%%%%%%%%%%%%%%%%%%%%%%%%%%%%%%%%%%%%%%%%%%%%%%%%%%%%%%%%%%%%%%%%%%%%%%%%%%

\section{Basic properties and analysis framework}

We start with some observations about elements' budgets; in particular, we show
that \DLM is well defined, i.e., it terminates.

\begin{lemma}
\label{lem:budgets_controlled}
\DLM terminates after every request. 
\end{lemma}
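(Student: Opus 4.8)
The for-loop that raises the budgets and each invocation of \fetch\ are trivially finite (at most $s-1 \le r-1$ and $\pos(z)-1 \le n-1$ steps, respectively), so the only thing that could make \DLM\ run forever is the \textbf{while}-loop of \autoref{line:dlma_last_line_2}--\ref{line:dlma_last_line_3}. The plan is to exhibit a nonnegative integer quantity that strictly decreases with every iteration of this loop. The natural candidate is
\[
 \Phi \;=\; \bigl|\{\, z \in \U \;\mid\; b(z) \ge \pos(z) \,\}\bigr|,
\]
the number of elements currently ``eligible'' to be fetched. Since $0 \le \Phi \le n$ at all times, it suffices to show that each execution of $\fetch(z)$ inside the loop decreases $\Phi$ by at least one.

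The key structural observation is about how $\fetch$ affects positions: moving $z$ to the front increments $\pos(w)$ by exactly one for every $w$ that preceded $z$, leaves $\pos(w)$ unchanged for every $w$ that followed $z$, and sets $\pos(z) = 1$. In particular, during the \textbf{while}-loop no element's position ever decreases except for the fetched element itself, and budgets are never raised inside the loop (the increments of \autoref{line:dlma_budget_increase} happen only before it, and $\fetch$ only ever \emph{lowers} a budget, to $0$). Consequently, for any element $w \ne z$, after the call $\fetch(z)$ we have $b(w)$ unchanged and $\pos(w)$ not smaller than before, so if $w$ was not eligible it is still not eligible; no new eligible elements are ever created. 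Meanwhile $z$ itself becomes ineligible, as afterwards $b(z) = 0 < 1 = \pos(z)$. Hence $\Phi$ drops by at least one (exactly the contribution of $z$) with each iteration.

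Therefore the \textbf{while}-loop performs at most $n$ iterations, and \DLM\ halts on every request. The only point that needs care — and which I would state explicitly — is the ``no new eligibility is created'' claim: it relies on the monotonicity of positions under $\fetch$ together with the fact that budgets are frozen during the loop, so there is no cascading effect. I do not expect any genuine obstacle here; the argument is short once the monotonicity of $\fetch$ on positions is isolated as a preliminary remark (which will in any case be reused later in the analysis).
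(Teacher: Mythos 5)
Your proof is correct and follows essentially the same route as the paper's: both arguments track the set of elements $z$ with $b(z) \geq \pos(z)$ and show its cardinality strictly decreases at each iteration of the while-loop, because the fetched element leaves the set (its budget is reset to $0$) while no other element can enter it (positions only increase and budgets are frozen during the loop). The only cosmetic remark is that your counter is named $\Phi$, which collides with the paper's potential function of the same name, so you would want to rename it.
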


\begin{proof}
Let $C = \{ z \in \U \suchthat b(z) \geq \pos(z) \}$. It suffices to show that
the cardinality of $C$ decreases at each iteration of the while loop in
\autoref{line:dlma_last_line_2} of \Cref{alg:dlma}. To this end, observe that in
each iteration, we execute operation $\fetch(z)$ for some $z \in C$. In effect,
the budget of $z$ is set to $0$, and thus $z$ is removed from $C$. The positions
of elements that preceded $z$ are incremented without changing their
budget: they may only be removed from $C$ but not added to it.
\end{proof}

\begin{observation}
\label{obs:budget_invariant}
Once \DLM finishes list reordering in a given step,
$b(z) < \pos(z)$ for any element $z \in \U$.
Moreover, $b(z) < (3/2) \cdot \pos(z)$ also during list reordering.
\end{observation}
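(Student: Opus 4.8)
The first assertion needs almost no work: by \Cref{lem:budgets_controlled} the while loop of \Cref{alg:dlma} terminates, and its guard is exactly the negation of the statement ``$b(z) \ge \pos(z)$ for some $z$''. So the plan for that part is a one-line appeal to the lemma together with the loop condition.

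For the $(3/2)$-bound my plan is to track the inequality $b(z) < (3/2)\cdot\pos(z)$ through one step of \Cref{alg:dlma}, checkpoint by checkpoint. At the start of the step the stronger inequality $b(z) < \pos(z)$ holds for every $z$ — either trivially (first step, all budgets zero) or by the already-proved first assertion applied to the previous step. I would then argue in three moves. (i) $\fetch(x)$ resets $b(x)$ to $0$ and never decreases anyone else's position, so $b(z) < \pos(z)$ still holds for everyone right after it. (ii) When the budgets of $y_1,\dots,y_{s-1}$ are each raised by $\ell/s$ (\autoref{line:dlma_budget_increase}), only those elements change; if there is any $y_i$ at all then $s \ge 2$, and $\ell = \pos(x) < \pos(y_i)$, so $b(y_i) < \pos(y_i) + \ell/s < \pos(y_i) + \pos(y_i)/2 = (3/2)\cdot\pos(y_i)$, while for every other element the bound $b(z) < \pos(z)$ is untouched; hence after \autoref{line:dlma_budget_increase} the $(3/2)$-bound holds for all elements. (iii) Each while-loop iteration performs a $\fetch(z)$, which zeroes $b(z)$, leaves all other budgets intact, and does not lower the position of any element other than $z$ — so the $(3/2)$-bound is preserved; and once the loop ends it tightens back to $b(z) < \pos(z)$ by the first assertion.

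The two things I would be most careful about are both inside move (ii). First, I must justify $\ell < \pos(y_i)$ at the instant of the increase: $\ell$ is recorded as $\pos(x)$ before $\fetch(x)$, we are promised $\pos(x) \le \pos(y_i)$, hence (the elements being distinct) $\pos(x) < \pos(y_i)$, and since $y_i$ sits behind $x$ it is untouched by $\fetch(x)$, so its position still exceeds $\ell$ afterwards. Second, I should make explicit that ``during list reordering'' is read as ``at every point between two consecutive $\fetch$ invocations'', treating each $\fetch$ (including its closing $b(\cdot)\gets 0$) as atomic — within a single $\fetch(z)$ the element $z$ transiently sits near the front before its budget is reset, which is precisely why the invariant is checked at the granularity of whole $\fetch$ operations rather than individual swaps. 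Neither point is deep, so I expect the argument to be short; the only genuine idea is the choice to carry the invariant $b(z) < (3/2)\cdot\pos(z)$ and to combine $s \ge 2$ with $\ell < \pos(y_i)$ to get the factor.
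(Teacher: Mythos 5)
Your proof is correct and follows essentially the same route as the paper's: the first claim is the negated while-loop guard combined with \autoref{lem:budgets_controlled}, and the $(3/2)$-bound comes from the budget increase of $\pos(x)/s \le \pos(x)/2 < \pos(y_i)/2$ on top of a budget that was at most $\pos(y_i)$. Your version is merely more explicit about the checkpoints (the effect of each \fetch{} and the position bookkeeping), which the paper leaves implicit.
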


\begin{proof}
Once the list reordering terminates, by \autoref{lem:budgets_controlled} and the while loop
in Lines \ref{line:dlma_last_line_2}--\ref{line:dlma_last_line_3}
of~\Cref{alg:dlma}, $b(z) < \pos(z)$ for any element $z$.

Within a step, the budgets are increased only for elements $y_i \in R$, i.e., only
when $s \geq 2$. The budget
of such an element $y_i$ is increased from at most $\pos(y_i)$ by $\pos(x) / s \leq \pos(x) / 2 <
\pos(y_i) / 2$, i.e., its resulting budget is smaller than $(3/2) \cdot \pos(y_i)$.
\end{proof}

%%%%%%%%%%%%%%%%%%%%%%%%%%%%%%%%%%%%%%%%%%%%%%%%%%%%%%%%%%%%%%%%%%%%%%%%%%%%%%%%%%%%%%%%%%%%%%%%

\subsection{Amortized analysis}
\label{sec:amortized}

In our analysis, we compare the cost in a single step of \DLM to the
corresponding cost of an~offline solution \OFF. For a more streamlined analysis
that will yield the result both for the static and dynamic scenarios, we split
each step into two stages. In the first stage, both \DLM and \OFF pay their
access costs, and then \DLM reorders its list according to its definition. In
the second stage, \OFF reorders its list. Note that the second stage exists
only in the dynamic scenario. 

We use $\pos$ and $\pos^*$ to denote the current permutation of \DLM and \OFF,
respectively. We introduce two potential functions $\Phi$ and $\Psi$, whose
values depend only on $\pos$ and~$\pos^*$.

In \autoref{sec:static}, we show that in the first stage of any step, it holds that 
\begin{equation}
    \label{eq:stage1_amortized}
    \Delta \DLM + \Delta \Phi + \Delta \Psi \leq O(r) \cdot \Delta \OFF.
\end{equation}
where $\Delta \DLM$, $\Delta \OFF$, $\Delta \Phi$, and $\Delta \Psi$ denote
increases of the costs of $\DLM$ and $\OFF$ and the increases of values of
$\Phi$ and $\Psi$, respectively. Relation \eqref{eq:stage1_amortized} summed
over all $m$ steps of the input sequence yields the competitive ratio of $O(r)$
of \DLM in the static scenario (where only the first stage is present).

In \autoref{sec:dynamic}, we analyze the performance of \DLM in the dynamic scenario.
We say that an~offline algorithm \OFF is \emph{MTF-based} if, for any request,
it moves one of the requested elements to the first position of the list and
does not touch the remaining elements. We define a class \MTFB of all MTF-based
offline algorithms. We show that in the second stage of any step, it holds that 
\begin{equation}
    \label{eq:stage2_amortized}
    \Delta \DLM + \Delta \Phi + \Delta \Psi \leq O(r^2) \cdot \Delta \OFF.
\end{equation}
for any $\OFF \in \MTFB$. Now, summing relations \eqref{eq:stage1_amortized} and
\eqref{eq:stage2_amortized} over all steps in the input yields that \DLM is
$O(r^2)$-competitive against the class \MTFB. We conclude by arguing that there
exists an MTF-based algorithm $\OFF^*$ which is a $4$-approximation of the
optimal solution~$\OPT$.

\subsection{Potential function}

To define potential functions, we first 
split $\pos(z)$ into two summands, $\pos(z) = 2^{p(z)}+q(z)$, such that 
$p(z)$ is a non-negative integer, and $q(z) \in \{0, \dots, 2^{p(z)}-1\}$.
We split $\pos^*(z)$ analogously as $\pos^*(z) = 2^{p^*(z)}+q^*(z)$.

We use the following parameters:
$\fa = 2$, $\fc = 5r$, $\fb = 7.5r + 5$, and $\fk = \lceil \log (6 \fb) \rceil$. 
Our analysis does not depend on the specific values of these parameters, 
but we require that they satisfy the following relations.

\begin{fact}
\label{fact:constants}
Parameters $\fa$, $\fb$ and $\fc$ satisfy the following relations: 
$\fa \geq 2$,
$\fc \geq (3 + \fa) \cdot r$,
$\fb \geq 3 + \fa + (3/2) \cdot \fc$.
Furthermore, $\fk$ is an integer satisfying $2^{\fk} \geq 6 \fb$.
\end{fact}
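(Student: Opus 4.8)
The plan is to verify each of the four relations by directly substituting the chosen values $\fa = 2$, $\fc = 5r$, $\fb = 7.5r + 5$, and $\fk = \lceil \log(6\fb) \rceil$. The relation $\fa \geq 2$ is immediate. For $\fc \geq (3+\fa)\cdot r$, note that $(3 + \fa)\cdot r = 5r$, which equals $\fc$, so the inequality holds (with equality). For $\fb \geq 3 + \fa + (3/2)\cdot \fc$, note that $3 + \fa + (3/2)\cdot\fc = 3 + 2 + (3/2)\cdot 5r = 7.5r + 5$, which equals $\fb$, so again the inequality holds with equality. Finally, $\fk$ is an integer because it is defined as a ceiling of a real number, and $\fk = \lceil \log(6\fb)\rceil \geq \log(6\fb)$ implies $2^{\fk} \geq 2^{\log(6\fb)} = 6\fb$.

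There is no real obstacle here: the statement is a bookkeeping fact recording that the concrete constants chosen for the analysis satisfy the structural inequalities invoked later (in Sections~\ref{sec:static} and~\ref{sec:dynamic}). The only point worth flagging is that the first three inequalities are in fact tight for this particular choice of constants, so no numerical slack is available; consequently the subsequent arguments must use these relations exactly as written rather than relying on any stronger bound, which is presumably why the paper isolates them as a named fact up front.
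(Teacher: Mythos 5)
Your verification is correct and is exactly the routine substitution check the paper leaves implicit (the Fact is stated without proof): all four relations follow by plugging in $\fa = 2$, $\fc = 5r$, $\fb = 7.5r+5$, and $\fk = \lceil \log(6\fb)\rceil$, with the first three holding with equality and the last following from the definition of the ceiling. Your side remark that the inequalities are tight for these constants, so the later lemmas rely on them exactly as stated, is a fair observation and consistent with the paper's note that the analysis depends only on these relations rather than the specific values.
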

For any element $z$, we define its potentials
\begin{equation}
\label{eq:def_phi}
    \Phi_z = \begin{cases}
        \fa \cdot b(z) & \text{if $p(z) \leq p^*(z) + \fk$,} \\
        \fb \cdot \pos(z) - \fc \cdot b(z) 
            & \text{if $p(z) \geq p^*(z) + \fk + 1$.} 
    \end{cases}
\end{equation}
\begin{equation}
\label{eq:def_psi}
    \Psi_z = \begin{cases}
        0 & \text{if $p(z) \leq p^*(z) + \fk - 1$,} \\
        2 \fb \cdot q(z) & \text{if $p(z) \geq p^*(z) + \fk$.} 
    \end{cases}
\end{equation}
We define the total potentials as 
$\Phi = \sum_{z \in \U} \Phi_z$ and $\Psi = \sum_{z \in \U} \Psi_z$. 

\begin{lemma}
\label{lem:non_negative}
At any time and for any element $z$, $\Phi_z \geq 0$ and $\Psi_z\geq 0$.
\end{lemma}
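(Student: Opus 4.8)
The plan is to dispatch $\Psi_z$ and the two cases of $\Phi_z$ in \eqref{eq:def_phi}--\eqref{eq:def_psi} one by one; in each case non-negativity reduces to the fact that budgets are always non-negative, together with the budget invariant of \autoref{obs:budget_invariant} and the constant relations of \autoref{fact:constants}.

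First I would record two trivial facts that will be used throughout. The budget $b(z)$ is non-negative at every moment: initially it equals $0$, and during the run it is only incremented (by $\ell/s \geq 0$ in \autoref{line:dlma_budget_increase}) or reset to $0$ inside \fetch. Also $\pos(z) \geq 1$, and $q(z) \in \{0, \dots, 2^{p(z)}-1\}$ so $q(z) \geq 0$. Given this, $\Psi_z \geq 0$ is immediate: it is either $0$ or $2\fb \cdot q(z)$, and both $\fb > 0$ and $q(z) \geq 0$. Likewise, in the first case of \eqref{eq:def_phi} we have $\Phi_z = \fa \cdot b(z) \geq 0$ because $\fa \geq 2 > 0$.

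The only case that needs an actual argument is the second case of $\Phi_z$, namely $\Phi_z = \fb \cdot \pos(z) - \fc \cdot b(z)$. Here I would invoke \autoref{obs:budget_invariant}, which guarantees $b(z) < (3/2)\cdot \pos(z)$ at every moment, both after a reordering completes and while a reordering is in progress. Combined with the inequality $\fb \geq (3/2)\cdot\fc$ — which follows from the relation $\fb \geq 3 + \fa + (3/2)\cdot\fc$ of \autoref{fact:constants} — this gives $\fc \cdot b(z) < (3/2)\cdot\fc\cdot\pos(z) \leq \fb\cdot\pos(z)$, hence $\Phi_z > 0$. Since these cases are exhaustive, the lemma follows.

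There is no real obstacle here; the statement is essentially a bookkeeping check. The one point worth being careful about is to use the stronger "during reordering" bound $b(z) < (3/2)\cdot\pos(z)$ rather than the post-reordering bound $b(z) < \pos(z)$, so that the claim genuinely holds at any time, including in the middle of the while loop of \autoref{alg:dlma}; this is exactly why \autoref{obs:budget_invariant} states the $(3/2)$-factor version and why \autoref{fact:constants} asks for the slack term $3 + \fa$ in the bound on $\fb$.
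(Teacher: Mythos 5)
Your proof is correct and follows the same route as the paper: $\Psi_z \geq 0$ is immediate from the definition, the first case of $\Phi_z$ uses non-negativity of budgets, and the second case combines $b(z) < (3/2)\cdot\pos(z)$ from \autoref{obs:budget_invariant} with $\fb \geq (3/2)\cdot\fc$ from \cref{fact:constants}. Your write-up is just a more explicit version of the paper's three-line argument, including the correct observation that the $(3/2)$-factor bound is what makes the claim hold mid-reordering.
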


\begin{proof}
The relation $\Psi_z \geq 0$ follows trivially from \eqref{eq:def_psi}.
By \cref{fact:constants}, $\fb \geq (3/2) \cdot \fc$.
This, together with \autoref{obs:budget_invariant}, 
implies that $\Phi_z \geq 0$. 
\end{proof}

%%%%%%%%%%%%%%%%%%%%%%%%%%%%%%%%%%%%%%%%%%%%%%%%%%%%%%%%%%%%%%%%%%%%%%%%%%%%%%%%%%%%%%%%%%%%%%%%

\subsection{Incrementing elements positions}

We first argue that increments of elements' positions induce small changes in
their potentials. Such increments occur for instance when \DLM fetches an
element $z$ to the list front: all elements that preceded $z$ are shifted by one
position towards the list tail. We show this property for the elements on the list
of \DLM first and then for the list of \OFF.

We say that an element $w$ is \emph{safe} if $p(w) \leq p^*(w) + \fk - 1$ 
and \emph{unsafe} otherwise. Note that for a safe element $w$, it holds 
that $\pi(w) \leq 2^{p(w)+1} \leq 2^{\fk} \cdot 2^{p^*(w)} \leq 2^{\fk} \cdot \pos^*(w)
= O(r) \cdot \pos^*(w)$, i.e., its position on the list of \DLM 
is at most $O(r)$ times greater than on the list of \OFF.

\begin{lemma}
\label{lem:elem_back_change}
Assume that the position of an element $w$ on the list of \DLM increases by $1$. 
Then, $\Delta \Phi_w + \Delta \Psi_w \leq 0$ if $w$ was safe before the movement 
and $\Delta \Phi_w + \Delta \Psi_w \leq 3 \beta$ otherwise.
\end{lemma}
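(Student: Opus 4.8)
`\textbf{Proof plan.}` The plan is to split into cases according to the value of $p(w)$ relative to $p^*(w) + \fk$ both before and after the unit position increment, and to track how $\Phi_w$ and $\Psi_w$ change in each case. Since $\pos(w)$ increases by exactly $1$, the quantity $p(w)$ either stays the same (when the new position is not a power of two) or increases by exactly $1$ (when $\pos(w)$ crosses a power of two, say from $2^{p}-1+1 = 2^{p}$... more precisely $p(w)$ increments precisely when $\pos(w)$ becomes a new power of two). Note $b(w)$ is unchanged throughout, since \fetch only changes budgets of the fetched element, not of the shifted ones.

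First I would handle the \emph{safe} case, i.e. $p(w) \leq p^*(w) + \fk - 1$ before the move. Here $\Phi_w = \fa \cdot b(w)$ and $\Psi_w = 0$ before. After the move, $p(w)$ is at most $p^*(w) + \fk$; if it is still $\leq p^*(w)+\fk-1$ then $\Phi_w$ is unchanged (still $\fa b(w)$) and $\Psi_w$ stays $0$, so $\Delta\Phi_w+\Delta\Psi_w = 0$. The delicate subcase is when $p(w)$ jumped from $p^*(w)+\fk-1$ to $p^*(w)+\fk$: then $\Phi_w$ is still computed by the first branch (since $p(w) \leq p^*(w)+\fk$), so $\Delta\Phi_w = 0$, but $\Psi_w$ switches to the second branch, becoming $2\fb \cdot q(w)$. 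The key point is that the jump in $p(w)$ happened because $\pos(w)$ became exactly $2^{p(w)}$, so $q(w) = 0$ at that moment, hence $\Psi_w = 0$ still and $\Delta\Psi_w = 0$. This is the crucial observation that makes the safe case clean.

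Next I would handle the \emph{unsafe} case, $p(w) \geq p^*(w)+\fk$ before the move. Now $\Phi_w$ is on the second branch when $p(w) \geq p^*(w)+\fk+1$ and on the first branch exactly when $p(w) = p^*(w)+\fk$; similarly $\Psi_w = 2\fb q(w)$ throughout (both before and after, since $p(w)$ only increases). If $p(w)$ stays the same, then $q(w)$ increases by $1$, so $\Delta\Psi_w = 2\fb$, and $\Phi_w$ changes by $\fb \cdot 1$ if on the second branch (or $0$ if on the first branch), giving $\Delta\Phi_w+\Delta\Psi_w \leq 3\fb$. If $p(w)$ increases by $1$, then $\pos(w)$ hit a power of two so $q(w)$ drops to $0$, giving $\Delta\Psi_w \leq 0$; for $\Phi_w$, if it was already on the second branch it changes by roughly $\fb \cdot 1 - \fc \cdot 0 = \fb$ (new position minus old, budget fixed) — more carefully $\Delta\Phi_w = \fb \cdot (\pos_{\text{new}} - \pos_{\text{old}}) = \fb \le 3\fb$; the one subcase needing a short computation is when $p(w)$ crosses from $p^*(w)+\fk$ to $p^*(w)+\fk+1$, where $\Phi_w$ switches from $\fa b(w)$ to $\fb\pos(w) - \fc b(w)$, and I would bound this difference using \autoref{obs:budget_invariant} ($b(w) < \pos(w)$, so $b(w) < 2^{p^*(w)+\fk+1} \le \pos(w)$) and \cref{fact:constants} to show it is at most $3\fb$; this, plus $\Delta\Psi_w \le 0$, closes the case.

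The main obstacle I expect is bookkeeping the branch-switching subcases correctly — in particular being careful that $p(w)$ increments \emph{only} when the new position is an exact power of two, which forces $q(w)=0$ and is what kills the potentially dangerous $\Psi$ increase, and bounding the $\Phi_w$ jump across the branch boundary $p(w) = p^*(w)+\fk \to p^*(w)+\fk+1$ using the invariant $b(w) < \pos(w)$ together with the constant relations of \cref{fact:constants}. None of these calculations is deep, but there are several cases and one must not drop the subcase where a position increment simultaneously triggers a branch change.
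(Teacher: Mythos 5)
Your case decomposition is the right one, the safe case is handled exactly as in the paper (the key point that a $p(w)$-increment forces $q'(w)=0$), and the non-boundary unsafe subcases correctly give $\fb + 2\fb = 3\fb$. However, your plan for the one subcase you flag as delicate --- $p(w)$ crossing from $p^*(w)+\fk$ to $p^*(w)+\fk+1$ --- does not work, and this is precisely where the real content of the lemma lies. In that subcase $\Phi_w$ switches from $\fa\cdot b(w)$ to $\fb\cdot\pos'(w)-\fc\cdot b(w)$, so $\Delta\Phi_w = \fb\cdot\pos'(w)-(\fc+\fa)\cdot b(w)$. This is \emph{not} at most $3\fb$: \autoref{obs:budget_invariant} only gives an \emph{upper} bound on $b(w)$, which caps the helpful negative term rather than guaranteeing it is large. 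Even in the most favorable case $b(w) < \tfrac32\pos(w)$ one gets $\Delta\Phi_w \geq \bigl(\fb-\tfrac32\fc-\tfrac32\fa\bigr)\cdot\pos(w) = 2\cdot\pos(w)$ with the paper's constants, and if $b(w)=0$ (perfectly possible for an element never requested as some $y_i$) then $\Delta\Phi_w = \fb\cdot\pos'(w)$. In all cases $\Delta\Phi_w = \Omega(\pos(w))$, which exceeds $3\fb$ for all sufficiently large positions, so no choice of constants rescues the claim.

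What actually closes this subcase is that $\Delta\Psi_w$ is not merely nonpositive but hugely negative and cancels the jump of $\Phi_w$: the crossing happens exactly when $q(w)=2^{p(w)}-1$ and $q'(w)=0$, so $\Delta\Psi_w = -2\fb\,(2^{p(w)}-1) = -\fb\cdot\pos'(w)+2\fb$, while $\Delta\Phi_w \leq \fb\cdot\pos'(w)$, giving $\Delta\Phi_w+\Delta\Psi_w \leq 2\fb$. Using only the sign of $\Delta\Psi_w$, as you propose, throws away exactly the cancellation that the potential $\Psi$ was introduced to provide (the paper explicitly remarks that ``the large growth of $\Phi_w$ is compensated by the drop of $\Psi_w$''). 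So the gap is not in the case structure but in the one inequality you defer to ``a short computation'': that computation, as described, cannot succeed, and must be replaced by the exact evaluation of $\Delta\Psi_w$.
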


\begin{proof}
By $\pi(w)=2^{p(w)}+q(w)$ and $\pi'(w) = \pi(w)+1=2^{p'(w)}+q'(w)$ we denote the
positions of $w$ before and after the movement, respectively.

Assume first that $w$ was safe before the movement. As $p'(w) \leq p(w)+1 \leq
p^*(w) + \fk$, $\Delta \Phi_z = \alpha \cdot b(z) - \alpha \cdot b(z) = 0$.
Furthermore, either $p'(w) = p(w)$, and then $\Delta \Psi_w = 0$ trivially, or
$p'(w) = p(w) + 1$, and then $q'(w) = 0$. In the latter case $\Delta \Psi_w = 2
\beta \cdot q'(z) - 0 = 0$ as well. This shows the first part of the lemma. 

Assume now that $w$ was unsafe ($p(w) \geq p^*(w) + \fk$) before the movement. 
We consider two cases.

\begin{itemize}
\item $p(w) = p^*(w) + \fk$ and $p'(w) = p(w) + 1$.

    It means that $q(w) = 2^{p(w)}-1$ and $q'(w) = 0$. 
    Then, 
    \begin{align*}
        \Delta \Phi_w & = \beta \cdot \pi'(z) - \gamma \cdot \beta(z) - \alpha \cdot \beta(z)
            \leq \beta \cdot \pi'(z) = \beta \cdot 2^{p'(w)} = 2 \beta \cdot 2^{p(w)}
            & \text{\;and} \\
        \Delta \Psi_w & = 2 \beta \cdot q'(z) - 2 \beta \cdot q(z) = - 2 \beta \cdot (2^{p(w)}-1)
            = -2 \beta \cdot 2^{p(w)} + 2 \beta.
    \end{align*}
    That is, the large growth of $\Phi_w$ is compensated by the drop of $\Psi_w$, 
    i.e, $\Delta \Phi_w + \Psi_w \leq 2 \beta$. 

\item $p(w) > p^*(w) + \fk$ or $p'(w) = p(w)$.

In such case, there is no case change in the definition of $\Phi_w$,
i.e, 
\[
    \Delta \Phi_w = 
    \begin{cases}
        \alpha \cdot b(w) - \alpha \cdot b(w) = 0 
            & \text{if $p(w) \leq p^*(w) + \fk$,} \\
        (\beta \cdot \pos'(w) - \gamma \cdot b(w)) - (\beta \cdot \pos(w) - \gamma \cdot b(w)) = \beta
            & \text{otherwise.}
    \end{cases}
\]

Furthermore, as $q'(w) \leq q(w) + 1$, 
$\Delta \Psi(z) = 2 \beta \cdot q'(w) - 2 \beta \cdot q(w) \leq 2\beta$.
Together, $\Delta \Phi_w  + \Delta \Psi_w \leq \beta + 2 \beta = 3\beta$.
\qedhere
\end{itemize}
\end{proof}

\begin{lemma}
\label{lem:opt_increments_elems}
Assume that the position of an element $w$ on the list of \OFF increases by $1$. 
Then, $\Delta \Phi_w \leq 0$ and $\Delta \Psi_w \leq 0$.
\end{lemma}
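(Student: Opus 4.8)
The plan is to mirror the structure of the proof of \autoref{lem:elem_back_change}, but the situation is easier because increasing $\pos^*(w)$ can only \emph{decrease} the gap $p(w) - p^*(w)$, so an element can only become ``more safe'' and never cross a case boundary in the unfavourable direction. Concretely, write $\pos^*(w) = 2^{p^*(w)} + q^*(w)$ before the move and $\pos^*(w)+1 = 2^{p^{*\prime}(w)} + q^{*\prime}(w)$ after. First I would record the two possibilities for how $p^*$ changes: either $p^{*\prime}(w) = p^*(w)$ (and $q^{*\prime}(w) = q^*(w)+1$), or $q^*(w) = 2^{p^*(w)}-1$ and we have a carry, giving $p^{*\prime}(w) = p^*(w)+1$ and $q^{*\prime}(w) = 0$. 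Note that $p(w)$, $q(w)$, $b(w)$ are untouched by an \OFF-move.

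Next I would go through the definition \eqref{eq:def_phi} of $\Phi_w$. The threshold defining which case applies is $p(w)$ versus $p^*(w) + \fk$; since $p^{*\prime}(w) \geq p^*(w)$, the quantity $p^*(w) + \fk$ only increases (by $0$ or $1$). So the only case transition that can happen is from the second branch ($\fb\cdot\pos(w) - \fc\cdot b(w)$) to the first branch ($\fa \cdot b(w)$); the reverse cannot occur. If no case change happens, then $\Delta\Phi_w = 0$ in the first branch (nothing depending on $\pos^*$ appears) and $\Delta\Phi_w = 0$ in the second branch as well, since $\fb\cdot\pos(w) - \fc\cdot b(w)$ does not involve $\pos^*(w)$. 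If a case change does happen, then before the move $p(w) = p^*(w)+\fk+1$ and after it $p(w) = p^{*\prime}(w)+\fk$, so $\Phi_w$ goes from $\fb\cdot\pos(w) - \fc\cdot b(w)$ to $\fa\cdot b(w)$; I would bound the new value using $b(w) < \pos(w)$ (\autoref{obs:budget_invariant}) and the old value using $\fb \geq (3/2)\fc$ together with $b(w) < \pos(w)$ to see that $\Delta\Phi_w \le \fa\cdot\pos(w) - (\fb - \fc)\cdot\pos(w) \le 0$, which holds because $\fb \ge 3 + \fa + (3/2)\fc \ge \fa + \fc$ by \cref{fact:constants}.

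For $\Psi_w$, the relevant threshold in \eqref{eq:def_psi} is $p(w)$ versus $p^*(w)+\fk-1$, which again only increases under the \OFF-move, so the only possible case transition is from $2\fb\cdot q(w)$ to $0$, which contributes $\Delta\Psi_w = -2\fb\cdot q(w) \le 0$. If no case transition occurs and we are in the ``$0$'' branch both before and after, then $\Delta\Psi_w = 0$. If we stay in the ``$2\fb\cdot q(w)$'' branch, then since $q(w)$ is unchanged by an \OFF-move, $\Delta\Psi_w = 0$ as well. In every case $\Delta\Psi_w \le 0$, completing the proof.

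I do not expect a genuine obstacle here: the whole point is that \OFF increasing a position is always ``good'' for the potentials. The one spot requiring a little care is the $\Phi$ case-change bound, where one must invoke the parameter inequalities from \cref{fact:constants} and \autoref{obs:budget_invariant} in the right way; everything else is bookkeeping over the two sub-cases of how the binary representation of $\pos^*(w)$ changes.
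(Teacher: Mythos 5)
Your argument is correct and matches the paper's own proof in essence: both reduce to the observation that an \OFF-increment can only raise $p^*(w)$, hence only push $w$ across the case boundaries of \eqref{eq:def_phi} and \eqref{eq:def_psi} in the ``favourable'' direction, and then bound the single possible transition for $\Phi_w$ by $(\fa+\fc-\fb)\cdot\pos(w)\le 0$ via \autoref{obs:budget_invariant} and \cref{fact:constants}, and for $\Psi_w$ by $-2\fb\cdot q(w)\le 0$. Your write-up is if anything slightly more explicit about the non-boundary cases; no gap.
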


\begin{proof}
Note that $p^*(w)$ may be either unchanged 
(in which case the values of $\Phi_w$ and $\Psi_w$ remain intact)
or it may be incremented. We analyze the latter case.

By \eqref{eq:def_phi}, the definition of $\Phi_w$, the value of  
$\Phi_w$ may change only
if $p^*(w)$ is incremented
from $p(w) + \fk - 1$ to $p(w) + \fk$. 
In such case, 
\begin{align*}
    \Delta\Psi_w 
    & = \alpha \cdot b(w) - \beta \cdot \pos(w) + \gamma \cdot b(w) \\
    & \leq (\alpha + \gamma - \beta) \cdot \pi(w) 
        && \text{(by~\autoref{obs:budget_invariant})} \\
    & \leq 0.
        && \text{(by~\cref{fact:constants})}
\end{align*}

By \eqref{eq:def_psi}, the definition of $\Psi_w$, the value of  
$\Psi_w$ may change only if $p^*(w)$ is incremented
from $p(w) + \fk$ to $p(w) + \fk + 1$. 
In such case, $\Delta\Psi_w = - 2 \beta \cdot q(w) \leq 0$.
\end{proof}

%%%%%%%%%%%%%%%%%%%%%%%%%%%%%%%%%%%%%%%%%%%%%%%%%%%%%%%%%%%%%%%%%%%%%%%%%%%%%%%%%%%%%%%%%%%%%%%%

\section{Analysis in the static scenario}
\label{sec:static}

As described in \autoref{sec:amortized}, in this part, we focus on the 
amortized cost of \DLM in the first stage of a step, i.e., where 
\DLM and \OFF both pay their access costs and then \DLM reorders its list.

\begin{lemma}
\label{lem:fetch_cost}
Whenever $\DLM$ executes operation $\fetch(z)$, it holds that $\Delta \DLM +
\Delta \Psi + \sum_{w \neq z} \Delta \Phi_w \leq 2 \cdot \pos(z)$.
\end{lemma}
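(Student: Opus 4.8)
The plan is to track exactly what $\fetch(z)$ does and then bound each resulting potential change. Executing $\fetch(z)$ costs $\Delta\DLM = \pos(z)-1$ swaps; it moves $z$ to position~$1$ (resetting $b(z)$ to $0$, which is irrelevant here since neither $\Phi_w$ for $w\neq z$ nor $\Psi$ depends on $b(z)$), and it increments by exactly $1$ the position of each of the $\pos(z)-1$ elements that originally preceded $z$, while elements after $z$ are left untouched. So I would rewrite the left-hand side as $\Delta\DLM + \Delta\Psi_z + \sum_{w\neq z}(\Delta\Phi_w+\Delta\Psi_w)$ and handle the three groups of terms separately.

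For the easy terms: after the fetch $z$ sits at position~$1$, so $\Psi_z=0$ afterwards and hence $\Delta\Psi_z\leq 0$ by \autoref{lem:non_negative}; and every element strictly after $z$ keeps its position (and $\pos^*$ is fixed in this stage), so it contributes $0$. For each of the $\pos(z)-1$ elements $w$ preceding $z$, its position increases by exactly~$1$, so \autoref{lem:elem_back_change} applies: $\Delta\Phi_w+\Delta\Psi_w\leq 0$ when $w$ is safe, and $\leq 3\fb$ when $w$ is unsafe. Writing $k$ for the number of unsafe elements preceding $z$, this yields $\Delta\DLM + \Delta\Psi + \sum_{w\neq z}\Delta\Phi_w \leq (\pos(z)-1) + 3\fb\,k$.

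The crux is therefore to show $3\fb\,k < \pos(z)$, i.e., that there are few unsafe elements before $z$. An unsafe element $w$ has $p(w)\geq p^*(w)+\fk$, hence $\pos(w)\geq 2^{p(w)}\geq 2^{\fk}\cdot 2^{p^*(w)} > 2^{\fk-1}\cdot\pos^*(w) \geq 3\fb\cdot\pos^*(w)$, where the middle inequality uses $\pos^*(w) < 2^{p^*(w)+1}$ and the last uses $2^{\fk}\geq 6\fb$ from \cref{fact:constants}. For an unsafe $w$ preceding $z$ we have $\pos(w) < \pos(z)$, so $\pos^*(w) < \pos(z)/(3\fb)$. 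Since distinct elements occupy distinct positive-integer positions on \OFF's list, at most $\pos(z)/(3\fb)$ of them can satisfy this, and in fact $k < \pos(z)/(3\fb)$, so $3\fb\,k < \pos(z)$. Plugging in, $\Delta\DLM + \Delta\Psi + \sum_{w\neq z}\Delta\Phi_w < (\pos(z)-1) + \pos(z) < 2\pos(z)$, as claimed.

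The main obstacle is exactly this counting step: the per-element bound $3\fb = \Theta(r)$ from \autoref{lem:elem_back_change} is far too weak if applied naively to all $\pos(z)-1$ preceding elements, and it is salvaged only by the observation that unsafe elements are ``spread out'' on \OFF's list — one unsafe element before $z$ forces \OFF to leave a whole block of $\Theta(r)$ positions essentially unused relative to \DLM. This is precisely the effect the offset $\fk = \Theta(\log r)$ in the definitions of $\Phi$ and $\Psi$ is engineered to guarantee; the rest is bookkeeping.
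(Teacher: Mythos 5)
Your proof is correct and follows essentially the same route as the paper's: both decompose the cost over the elements preceding $z$, apply \autoref{lem:elem_back_change} to charge $0$ to safe elements and $3\fb$ to unsafe ones, and then bound the number of unsafe predecessors by observing that each must occupy a position smaller than roughly $\pos(z)/(3\fb)$ on \OFF's list, using $2^{\fk}\geq 6\fb$ from \cref{fact:constants}. The only cosmetic difference is that the paper phrases this counting step via an explicit set $K'=\{w \suchthat \pi^*(w)\leq 2^{p(z)-\fk+1}\}$ rather than your direct inequality on $\pos^*(w)$.
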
 

\begin{proof}
As defined in Routine~\ref{alg:fetch}, the cost of operation $\fetch(z)$ is
$\Delta \DLM = \pos(z) - 1 < \pos(z)$.
We first analyze the potential changes of elements from
set $K$ of $\pi(z)-1$ elements that originally preceded $z$.

Let $K' = \{ w \in K \suchthat \pi^*(w) \leq 2^{p(z) - \fk+1} \}$.
Observe that any $w \in K \setminus K'$ satisfies
$\pi^*(w) > 2^{p(z) - \fk + 1}$, which implies 
$p^*(w) \geq p(z) - \fk + 1 \geq p(w) - \fk + 1$, and thus $w$ is safe. 
Thus, among elements of $K$, only elements from $K'$ can be unsafe.
By \autoref{lem:elem_back_change}, 
\begin{align*}
    \sum_{w \in K} (\Delta \Phi_w + \Delta \Psi_w) 
        & \leq \sum_{w \in K'} (\Delta \Phi_w + \Delta \Psi_w) 
            \leq 3 \beta \cdot |K'| 
            = 3 \beta \cdot 2^{p(z) - \fk+1} \\
        & \leq 2^{p(z)} \leq \pi(z). && (\text{by \cref{fact:constants}})
\end{align*}

As the only elements that may change their budgets are $z$ and elements from $K$,
we have $\Delta \DLM + \Delta \Psi + \sum_{w \neq z} \Delta \Phi_w
= \Delta \DLM + \sum_{w \in K} \Delta \Psi_w + \Delta \Psi_z + \sum_{w \in K}
\Delta \Phi_w \leq 2 \cdot \pi(z) + \Delta \Psi_z \leq 2 \cdot \pi(z)$. The last inequality
follows as $\Psi_z$ drops to $0$ when $z$ is moved to the list front.
\end{proof}

%%%%%%%%%%%%%%%%%%%%%%%%%%%%%%%%%%%%%%%%%%%%%%%%%%%%%%%%%%%%%%%%%%%%%%%%%%%%%%%%%%%%%%%%%%%%%%%%

Now we may split the cost of \DLM in a single step into parts incurred by Lines
\ref{line:dlma_first}--\ref{line:dlma_budget_increase} and
Lines~\ref{line:dlma_last_line_2}--\ref{line:dlma_last_line_3}, and bound them
separately.

\begin{lemma}
\label{lem:amortized_cost_56}
Whenever $\DLM$ executes Lines~\ref{line:dlma_last_line_2}--\ref{line:dlma_last_line_3}
of \Cref{alg:dlma}, $\Delta \DLM + \Delta \Phi +\Delta \Psi \leq 0$.
\end{lemma}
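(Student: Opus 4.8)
The plan is to analyze a single execution of the while loop in Lines~\ref{line:dlma_last_line_2}--\ref{line:dlma_last_line_3}, which consists of some number $k \geq 0$ of $\fetch$ operations, and to show that each individual $\fetch$ operation decreases $\Phi + \Psi$ by enough to pay for its cost. The key point is that a $\fetch(z)$ is triggered only when the element $z$ satisfies $b(z) \geq \pos(z)$, so in the potential bookkeeping we have a large budget $b(z)$ available to cancel the $\DLM$ cost and the potential growth of the shifted elements.

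First I would invoke \autoref{lem:fetch_cost}, which already gives, for a single $\fetch(z)$, the bound $\Delta \DLM + \Delta \Psi + \sum_{w \neq z} \Delta \Phi_w \leq 2 \cdot \pos(z)$. What remains is to account for $\Delta \Phi_z$, the change in $z$'s own $\Phi$-potential. Here I would split into the two cases of the definition~\eqref{eq:def_phi}. If $z$ is such that $p(z) \leq p^*(z) + \fk$ (so $\Phi_z = \fa \cdot b(z)$ before the fetch), then since $\fetch(z)$ resets $b(z)$ to $0$ we get $\Delta \Phi_z = -\fa \cdot b(z) \leq -\fa \cdot \pos(z)$, using the triggering condition $b(z) \geq \pos(z)$ and $\fa \geq 2$. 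Combined with \autoref{lem:fetch_cost}, this yields $\Delta \DLM + \Delta \Phi + \Delta \Psi \leq 2 \pos(z) - \fa \cdot \pos(z) \leq 0$ for a single fetch of such an element. If instead $p(z) \geq p^*(z) + \fk + 1$ (so $\Phi_z = \fb \cdot \pos(z) - \fc \cdot b(z)$), then resetting $b(z)$ to $0$ only increases $\Phi_z$ — but note that $z$ is being moved to the front, so its position drops to $1$, and one should track $\Phi_z$ at the new position; since $z$ becomes safe after the move, its $\Phi_z$ becomes $\fa \cdot b(z) = \fa \cdot 0 = 0$, so $\Delta \Phi_z = -\fb \cdot \pos(z) + \fc \cdot b(z) \leq -\fb \cdot \pos(z) + \fc \cdot (3/2)\pos(z)$ via \autoref{obs:budget_invariant}... actually more carefully, before the fetch $b(z) < (3/2)\pos(z)$ might not be what we want; we instead use $b(z) < \pos(z)$ does not hold at trigger time, but the invariant bound $b(z) < (3/2)\pos(z)$ from \autoref{obs:budget_invariant} does, giving $\Delta \Phi_z \leq -\fb \pos(z) + (3/2)\fc \pos(z) \leq 0$ since $\fb \geq 3 + \fa + (3/2)\fc$. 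Either way $\Delta \Phi_z \leq 0$ in this case, and again $\Delta \DLM + \Delta \Phi + \Delta \Psi \leq 2\pos(z) + \Delta\Phi_z$, which unfortunately is not obviously $\leq 0$ without more care — so in this branch I would instead lean on the fact that $\pos(z)$ is small relative to $\pos^*(z)$ being even smaller, and use the $\fc \cdot b(z)$ term more aggressively together with $b(z) \geq \pos(z)$ to get $\Delta \Phi_z \leq \fc \cdot b(z) - \fb \pos(z)$, which is not helpful when $b(z)$ is large; the clean route is the one where after the move $z$ is safe so we simply need $2\pos(z) + \fa\cdot b(z)_{\text{new}} - (\fb\pos(z) - \fc b(z)_{\text{old}})$, and since $\fc b(z)_{\text{old}}$ could be huge this seems wrong — therefore I expect the actual argument only ever needs the first case, because the while loop only fetches $z$ with $b(z) \geq \pos(z)$ and $\Phi_z$ in the second case is defined via $\fb\pos(z) - \fc b(z)$ which, combined with the invariant, is already small.

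Summing over all $k$ fetches in one execution of the while loop then gives $\Delta \DLM + \Delta \Phi + \Delta \Psi \leq 0$ for Lines~\ref{line:dlma_last_line_2}--\ref{line:dlma_last_line_3} as a whole; here I would be careful that the per-fetch potential changes telescope correctly, i.e., that the $\pos(z)$ appearing in each application of \autoref{lem:fetch_cost} refers to the position at the moment of that fetch (which \autoref{lem:fetch_cost} already handles, since it is stated for an arbitrary execution of $\fetch$), and that no element's safety status or potential is double-counted across iterations — which is fine because each iteration's accounting is self-contained.

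The main obstacle I anticipate is the case analysis on $\Phi_z$ for the fetched element when $z$ is unsafe (the second branch of~\eqref{eq:def_phi}): one must verify that moving an unsafe element to the front, which changes its $p(z)$ dramatically and flips it to safe, produces a net nonpositive change once combined with the $2\pos(z)$ slack from \autoref{lem:fetch_cost}, and this hinges delicately on \autoref{obs:budget_invariant}'s bound $b(z) < (3/2)\pos(z)$ at fetch time together with \cref{fact:constants} (specifically $\fc \geq (3+\fa)r$ and $\fb \geq 3 + \fa + (3/2)\fc$). Getting the constants to line up in this branch — rather than the routine first branch where $\fa \geq 2$ suffices — is where the real work lies.
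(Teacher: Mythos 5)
Your approach is the same as the paper's: apply \autoref{lem:fetch_cost} to a single iteration of the while loop and then show that $\Phi_z$ for the fetched element drops by at least $2\cdot\pos(z)$, which cancels the $2\cdot\pos(z)$ slack; summing over iterations finishes the proof. Your first case ($\Phi_z = \fa\cdot b(z)$, drop of $\fa\cdot b(z) \geq \fa\cdot\pos(z) \geq 2\pos(z)$) is exactly right.

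The problem is that you do not close the second case, and you end with an incorrect guess that it can be avoided. An element with $b(z)\geq\pos(z)$ can perfectly well be unsafe (its position in \DLM's list can vastly exceed its position in \OFF's list), so the branch $\Phi_z = \fb\cdot\pos(z)-\fc\cdot b(z)$ genuinely occurs and must be handled. The irony is that the inequality you wrote down already does the job: you derived $\Delta\Phi_z \leq -\fb\cdot\pos(z) + (3/2)\fc\cdot\pos(z) = -(\fb-(3/2)\fc)\cdot\pos(z)$, and by \cref{fact:constants} we have $\fb-(3/2)\fc \geq 3+\fa \geq 5 > 2$, so $\Delta\Phi_z \leq -2\cdot\pos(z)$ here too --- you only stated the weaker conclusion ``$\leq 0$'' and then worried (needlessly) that $\fc\cdot b(z)_{\text{old}}$ could be huge; it cannot, precisely because of the in-reordering invariant $b(z)<(3/2)\pos(z)$ from \autoref{obs:budget_invariant} that you yourself cite. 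The paper handles both cases in a single stroke by bounding the pre-fetch potential from below: $\Phi_z \geq \min\{\fa\cdot b(z),\, \fb\cdot\pos(z)-\fc\cdot b(z)\} \geq \min\{\fa,\,\fb-(3/2)\fc\}\cdot\pos(z) \geq 2\cdot\pos(z)$, and then notes that $\Phi_z$ drops to $0$ when $z$ reaches position $1$ with budget $0$. With that one-line repair your argument is complete and identical in substance to the paper's.
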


\begin{proof}
Let $z$ be the element moved in \autoref{line:dlma_last_line_3}. 
\autoref{line:dlma_last_line_2} guarantees that $b(z) \geq \pos(z)$
and \autoref{obs:budget_invariant} implies $b(z) \leq (3/2) \cdot \pos(z)$.
The value of $\Phi_z$ before the movement is then
\begin{align*}
    \Phi_z
    & \geq \min \{\fa \cdot b(z), \, \fb \cdot \pos(z) - \fc \cdot b(z) \}\\
    & \geq \min \left\{\fa \,,  \fb - (3/2) \cdot \fc \right\} \cdot \pos(z) \\
    & \geq 2 \cdot \pos(z). & \text{(by \cref{fact:constants})}
\end{align*}
When $z$ is moved to the list front, potential $\Phi_z$ drops to $0$, and thus
$\Delta \Phi_z \leq -2\cdot \pos(z)$. Hence, using \autoref{lem:fetch_cost},
$\Delta \DLM + \Delta \Phi +\Delta \Psi\leq 2 \cdot \pos(z) + \Delta \Phi_z \leq 0$.
\end{proof}

\begin{lemma}
\label{lem:amortized_cost}
Fix any step and consider its first part, where $\DLM$ 
pays for its access and movement costs, whereas $\OFF$ pays for 
its access cost. Then,
$\Delta \DLM + \Delta \Phi +\Delta \Psi \leq 
(3 + \fa) \cdot 2^{\fk+1} \cdot \Delta \OFF = 
O(r) \cdot \Delta \OFF$.
\end{lemma}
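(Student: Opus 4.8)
The plan is to decompose a single step of \DLM (in its first stage) into the three blocks of \Cref{alg:dlma}: the access cost plus $\fetch(x)$ in Lines~\ref{line:dlma_first}--\ref{line:dlma_first}, the budget increments in Line~\ref{line:dlma_budget_increase}, and the cleanup while-loop in Lines~\ref{line:dlma_last_line_2}--\ref{line:dlma_last_line_3}. By \autoref{lem:amortized_cost_56} the entire while-loop contributes at most $0$ to $\Delta\DLM+\Delta\Phi+\Delta\Psi$, so it can be discarded immediately and the whole burden falls on the first two blocks. For the $\fetch(x)$ block, \autoref{lem:fetch_cost} gives $\Delta\DLM+\Delta\Psi+\sum_{w\neq x}\Delta\Phi_w \le 2\pos(x)$; what remains is to control $\Delta\Phi_x$, the potential change of the fetched element $x$ itself. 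For the budget-increment block, I must bound the increase of $\Phi_{y_i}$ (and $\Psi_{y_i}$, which is unaffected by a pure budget change) over the at most $s-1$ elements $y_i$ whose budgets grow by $\ell/s$.

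First I would handle $x$. Its access cost to \DLM is $\ell=\pos(x)$, and $\OFF$ pays at least $\pos^*(x)\ge 1$ in this step (it is the first element of $R_t$ for $\OFF$ too, or at least $\OFF$'s access cost is a positive integer), so $\Delta\OFF \ge 1$. After $\fetch(x)$, $x$ sits at position $1$ and $b(x)=0$, so $\Phi_x$ afterwards is either $\fa\cdot 0=0$ or $\fb\cdot 1-\fc\cdot 0=\fb$; in either case $\Phi_x$ afterwards is $O(r)$, and since $\Phi_x\ge 0$ before (by \autoref{lem:non_negative}), $\Delta\Phi_x \le \fb = O(r)$. Combining with \autoref{lem:fetch_cost}, the $\fetch(x)$ block contributes at most $2\pos(x)+\fb$ to $\Delta\DLM+\Delta\Phi+\Delta\Psi$. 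The key point is then to absorb the $2\pos(x)$ term against $\Delta\OFF$: either $x$ was safe before the step, in which case $\pos(x)\le 2^{\fk+1}\cdot\pos^*(x) \le 2^{\fk+1}\cdot\Delta\OFF$, giving the claimed $(3+\fa)\cdot 2^{\fk+1}\cdot\Delta\OFF$ bound directly; or $x$ was unsafe, in which case its old potential $\Phi_x = \fb\cdot\pos(x)-\fc\cdot b(x) \ge (\fb-(3/2)\fc)\cdot\pos(x) \ge 2\pos(x)$ by \cref{fact:constants} and \autoref{obs:budget_invariant}, so the drop $\Delta\Phi_x \le -2\pos(x)$ more than pays for the access cost, and no charge to $\OFF$ is needed at all.

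Next, the budget increments. Each $y_i$ has $\pos(y_i)\ge\pos(x)=\ell$ (by the ordering assumption in the request), and its budget grows by $\ell/s \le \ell \le \pos(y_i)$. If $y_i$ is in the first case of \eqref{eq:def_phi}, then $\Delta\Phi_{y_i} = \fa\cdot(\ell/s)$; if in the second case, a budget increase only \emph{decreases} $\Phi_{y_i}$ (the coefficient $-\fc<0$), so $\Delta\Phi_{y_i}\le 0$. Thus $\sum_{i}\Delta\Phi_{y_i} \le \fa\cdot(s-1)\cdot(\ell/s) \le \fa\cdot\ell = \fa\cdot\pos(x)$, and $\Psi$ is untouched by budget changes. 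So the budget block contributes at most $\fa\cdot\pos(x)$, which is again $O(r)\cdot\pos(x)$ and is absorbed exactly as the $2\pos(x)$ term above (the constant $3+\fa$ in the statement is precisely $2$ from \autoref{lem:fetch_cost}, plus $\fa$ from here, plus a slack term; note that the $\fb$ from $\Delta\Phi_x$ is itself $O(r)$ and can be folded into the $\Delta\OFF\ge1$ charge). Summing the three blocks: $\Delta\DLM+\Delta\Phi+\Delta\Psi \le (2+\fa)\cdot\pos(x) + \fb + 0$, and since $\pos(x)\le 2^{\fk+1}\cdot\Delta\OFF$ in the safe case (and the whole thing is $\le 0$ relative to the needed bound in the unsafe case), we get $\le (3+\fa)\cdot 2^{\fk+1}\cdot\Delta\OFF$.

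The main obstacle I expect is the case split on whether $x$ is safe: the clean $O(r)\cdot\pos^*(x)$ comparison only works when $x$ is safe, and the argument genuinely needs the unsafe case to be self-financing through the potential drop $\Delta\Phi_x\le -2\pos(x)$ — getting the constants in \cref{fact:constants} to line up so that $\fb-(3/2)\fc \ge 2$ and simultaneously $\fb\le$ something comparable to $(3+\fa)2^{\fk+1}$ is the delicate bookkeeping. A secondary subtlety is making sure the $\fb$ additive term from refilling $\Phi_x$ does not spoil the multiplicative bound; this is fine because $\Delta\OFF\ge 1$ and $\fb=O(r)\le (3+\fa)2^{\fk+1}$, but it must be stated.
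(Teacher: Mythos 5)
Your overall skeleton matches the paper's: discard the while-loop via \autoref{lem:amortized_cost_56}, charge the fetch via \autoref{lem:fetch_cost}, and bound the budget increments by $\fa\cdot\pos(x)$. But there is a genuine gap at the heart of the argument: you compare $p(x)$ against $p^*(x)$, whereas the correct comparison is against $p^*(u)$, where $u$ is the element of $R$ minimizing $\pos^*(u)$, since that is what determines $\Delta\OFF$. Your safe case asserts $\pos^*(x)\le\Delta\OFF$, which is false in general: $\Delta\OFF=\pos^*(u)\le\pos^*(x)$, possibly with a huge gap. Concretely, take $R=\{x,y\}$ with $\pos(x)=1000$, $\pos(y)=2000$, $\pos^*(y)=1$ and $\pos^*(x)$ enormous. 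Then $x$ is safe in your sense, so there is no potential drop $-\Phi_x$ to spend, yet $\Delta\OFF=1$ while \DLM pays about $3\cdot\pos(x)=3000$; neither of your two cases yields a bound of $O(r)\cdot\Delta\OFF$. The paper's proof instead splits on $p(x)$ versus $p^*(u)+\fk$ and, in the troublesome subcase $u=y_j\ne x$, rescues the argument by observing that $\pos(y_j)\ge\pos(x)$ forces $y_j$ into the second branch of \eqref{eq:def_phi}, so the budget increment $\pos(x)/s$ \emph{decreases} $\Phi_{y_j}$ by $\fc\cdot\pos(x)/s\ge(\fc/r)\cdot\pos(x)$, and $\fc\ge(3+\fa)\cdot r$ makes this drop pay for everything. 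You note in passing that a budget increase can only decrease $\Phi_{y_i}$ in the second branch, but you use this only as a one-sided bound ($\le 0$); exploiting its magnitude is the missing idea, and it is precisely the ``holds on average'' derandomization step the paper highlights as its main technical contribution.

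Two smaller issues, both fixable: (i) ``unsafe'' means $p(x)\ge p^*(x)+\fk$, while the second branch of $\Phi_x$ requires $p(x)\ge p^*(x)+\fk+1$, so your unsafe case does not automatically give $\Phi_x=\fb\cdot\pos(x)-\fc\cdot b(x)$; (ii) your final sum $(2+\fa)\cdot\pos(x)+\fb$ omits the access cost $\pos(x)$ (the paper's tally is $3\cdot\pos(x)$ before adding the $\fa\cdot\pos(x)$ from budgets), so the self-financing case must cover $(3+\fa)\cdot\pos(x)$, which is why \cref{fact:constants} demands $\fb\ge 3+\fa+(3/2)\cdot\fc$ rather than merely $\fb-(3/2)\cdot\fc\ge 2$.
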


\begin{proof}
Let $R = \{x, y_1, \dots, y_{s-1} \}$ be the requested set, where $s \leq r$ and
$\pos(x) < \pos(y_i)$ for any $i \in [s-1]$. Let $\Phi_x$
denote the value of the potential just before the request.
It suffices to analyze the amortized cost of \DLM in
Lines~\ref{line:dlma_first}--\ref{line:dlma_budget_increase} as the cost in the
subsequent lines is at most~$0$ by \autoref{lem:amortized_cost_56}.
In these lines:
\begin{itemize}
\item \DLM pays $\pi(x)$ for the access.
\item \DLM performs the operation $\fetch(x)$, whose 
amortized cost is, by \autoref{lem:fetch_cost}, 
at most $2 \cdot \pos(x) - \Phi_x$.
\item The budget of $y_i$ grows by $\Delta b(y_i) = \pi(x) / s$ for each $i \in [s-1]$. 
As these elements do not move (within Lines~\ref{line:dlma_first}--\ref{line:dlma_budget_increase}), 
$\Delta \Psi_{y_i} = 0$. 
\end{itemize}
Thus, we obtain 
\begin{equation}
\label{eq:alg_cost_1}
    \Delta \DLM + \Delta \Phi +\Delta \Psi
    \leq 3 \cdot \pos(x) - \Phi_x + \sum_{i \in [s-1]} \Delta \Phi_{y_i}.
\end{equation}
As elements $y_i$ do not move (within
Lines~\ref{line:dlma_first}--\ref{line:dlma_budget_increase})),
the change in $\Phi_{y_i}$ can be induced only by the change in the budget 
of $y_i$.
Let $u \in R$ be the element with the smallest position on the list of \OFF, i.e.,
$\Delta \OFF = \pos^*(u)$.
We consider three cases. 

\begin{itemize}
\item $p(x) \leq p^*(u) + \fk$. 

Then $\pi(x) \leq 2^{p(x)+1} \leq 2^{\fk+1} \cdot 2^{p^*(u)} \leq 2^{\fk+1} \cdot \pos^*(u)
= 2^{\fk+1} \cdot \Delta \OFF$.
Note that 
$\sum_{i \in [s-1]} \Delta \Phi_{y_i} \leq 
\sum_{i \in [s-1]} \alpha \cdot \Delta b(y_i) =
(s-1) \cdot \alpha \cdot \pos(x) / s < \alpha \cdot \pos(x)$.
By \autoref{lem:non_negative}, $\Phi_x \geq 0$, and thus using \eqref{eq:alg_cost_1},
\begin{align*}
    \Delta \DLM + \Delta \Phi +\Delta \Psi 
    & < 3 \cdot \pos(x) + \alpha \cdot \pi(x)
    \leq (3 + \alpha) \cdot 2^{\fk+1} \cdot \Delta \OFF.
\end{align*}

\item $p(x) \geq p^*(u) + \fk + 1$ and $u = x$.

In this case, 
    $\Phi_x \geq \fb \cdot \pos(x) - \fc \cdot b(x) \geq (\fb - (3/2)
    \cdot \fc) \cdot \pos(x)$ (cf.~\autoref{obs:budget_invariant}). 
    By plugging this bound to \eqref{eq:alg_cost_1}, we obtain 
\begin{align*}
\Delta \DLM + \Delta \Phi +\Delta \Psi 
& \leq 3 \cdot \pos(x) + 
    (\fb - (3/2) \cdot \fc) \cdot \pos(x) + \alpha \cdot \pos(x) \leq 0,
\end{align*}
where the last inequality follows as $\beta \geq 3 + (3/2) \cdot \gamma + \alpha$
by \cref{fact:constants}.

\item $p(x) \geq p^*(u) + \fk + 1$ and $u = y_j$ for some $j \in [s-1]$. 

    Recall that $\pos(x) < \pos(y_i)$, and thus 
    $p(y_j) \geq p(x)$. Hence, $p(y_j) \geq p^*(y_j) + \fk + 1$.
    In such a case,
    \begin{align*}
        \sum_{i \in [s-1]} \Delta \Phi_{y_i} 
        & = \Delta \Phi_{y_j} + \sum_{i \in [s-1] \setminus \{j\} } \Delta \Phi_{y_i} \\
        & \leq -\gamma \cdot \Delta b(y_j) + 
        \sum_{i \in [s-1] \setminus\{j\}} \alpha \cdot \Delta b(y_i) \\
        & = - \gamma \cdot \pos(x) / s + (s-2) \cdot \alpha \cdot \pos(x) / s \\
        & < (\alpha - \gamma/r) \cdot \pos(x). && \text{(as $s \leq r$)} 
    \end{align*}
    Plugging the bound above and $\Phi_x \geq 0$ to 
    \eqref{eq:alg_cost_1} yields
    \[
    \Delta \DLM + \Delta \Phi+\Delta \Psi 
    \leq (3 + \fa - \fc/r) \cdot \pos(x) \leq 0,
    \]
    where the last inequality again follows by \cref{fact:constants}. 
    \qedhere
\end{itemize}
\end{proof}

\begin{theorem} \label{thm:DLMSO}
$\DLM$ is $O(r)$-competitive in the static scenario.
\end{theorem}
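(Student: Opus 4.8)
The plan is to obtain the theorem by telescoping the single-step inequality of \autoref{lem:amortized_cost} over the whole request sequence. First I would observe that in the static scenario each step consists \emph{only} of the first stage considered there: the offline solution \OFF starts with a fixed permutation and never reorders it, so in step $t$ the quantity $\Delta_t \OFF$ is exactly \OFF's access cost and there is no second stage. Hence \autoref{lem:amortized_cost} applies verbatim to every step $t$, giving
\[
\Delta_t \DLM + \Delta_t \Phi + \Delta_t \Psi \;\le\; (3+\fa)\cdot 2^{\fk+1}\cdot \Delta_t\OFF \;=\; O(r)\cdot \Delta_t\OFF,
\]
where $\Delta_t(\cdot)$ denotes the change of a quantity during step $t$, and the $O(r)$ bound on the coefficient follows since $2^{\fk}\le 12\fb = O(r)$ by \cref{fact:constants}.

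Next I would sum this over all $m$ steps. The potential terms telescope, $\sum_t \Delta_t\Phi = \Phi_m-\Phi_0$ and $\sum_t \Delta_t\Psi = \Psi_m-\Psi_0$, yielding
\[
\DLM(\I) + (\Phi_m+\Psi_m) - (\Phi_0+\Psi_0) \;\le\; O(r)\cdot \OFF(\I).
\]
By \autoref{lem:non_negative}, $\Phi_m+\Psi_m\ge 0$, so this term can be dropped from the left-hand side. It then remains to bound the initial potential by a constant independent of the request sequence. Since all budgets are zero at the start, \eqref{eq:def_phi} gives $\Phi_{z}\le \fb\cdot\pos_0(z)$ for every $z$ (the first branch equals $\fa\cdot b(z)=0$, the second equals $\fb\cdot\pos_0(z)$), and \eqref{eq:def_psi} gives $\Psi_z\le 2\fb\cdot q_0(z)\le 2\fb\cdot\pos_0(z)$; crucially, neither bound depends on \OFF's permutation $\pos^*$, because $b\equiv 0$ initially kills the only $\pos^*$-sensitive branch that could be large. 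Summing over $z$ gives $\Phi_0+\Psi_0 \le 3\fb\cdot\sum_{z}\pos_0(z) = 3\fb\cdot\binom{n+1}{2}$, which depends only on $n$ (and $\pi_0$) and thus serves as the additive constant $\xi$ permitted in the definition of competitiveness. Substituting, I get $\DLM(\I)\le O(r)\cdot\OFF(\I)+\xi$ for every input $\I$ and every fixed offline permutation, which is exactly $O(r)$-competitiveness in the static scenario.

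I do not expect a genuine obstacle here: essentially all of the difficulty is already concentrated in \autoref{lem:amortized_cost} (which itself rests on \autoref{lem:fetch_cost}, \autoref{lem:amortized_cost_56}, the potential bookkeeping of \autoref{lem:elem_back_change}, and the tuned constants of \cref{fact:constants}). Given those, this theorem is a one-line telescoping argument. The only two points that need a moment of care are the ones highlighted above: that the static scenario introduces no second-stage term (so the $O(r)$ coefficient of \autoref{lem:amortized_cost}, rather than the $O(r^2)$ of the dynamic analysis, is the relevant one), and that the initial potential is genuinely a sequence-independent constant rather than something that could scale with the input — which holds precisely because budgets are initialized to zero.
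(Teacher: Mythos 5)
Your argument is correct and follows essentially the same route as the paper's own proof: apply \autoref{lem:amortized_cost} to each step, telescope the potentials, drop the non-negative final potential via \autoref{lem:non_negative}, and absorb the initial potential into the additive constant. Your explicit bound $\Phi^0+\Psi^0\le 3\fb\cdot\binom{n+1}{2}$ (valid because all budgets start at zero) just makes precise what the paper states more briefly.
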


\begin{proof}
Fix any input $\I$ ad any offline solution \OFF that maintains a fixed permutation.
For any step $t$, let $\Phi^t$ and $\Psi^t$ denote the total 
potentials right after step $t$, while $\Phi^0$ and $\Psi^0$ 
be the initial potentials.
By \autoref{lem:amortized_cost},
\begin{align}
\label{eq:single_step}
    \DLM_t(\I) & + \Phi^t + \Psi^t - \Phi^{t-1}- \Psi^{t-1} = O(r) \cdot \OFF_t(\I),
\end{align}
where $\DLM_t(\I)$ and $\OFF_t(\I)$ denote the costs of \DLM and \OFF in
step $t$, respectively. By summing \eqref{eq:single_step} over all $m$ steps of
the input, we obtain $\DLM(\I) + \Phi^m +\Psi^m - \Phi^0 - \Psi^0 \leq O(r)
\cdot \OFF(\I)$. As $\Phi^m + \Psi^m \geq 0$,
\[
    \DLM(\I) \leq O(r) \cdot \OFF(\I) + \Phi^0 + \Psi^0 .
\]
Note that the initial potentials might be non-zero as in the static 
scenario \OFF starts in its permutation which might be different 
from $\pi_0$. That said, both initial potentials can be universally
upper-bounded by the amount independent of $\I$, and thus 
\DLM is $O(r)$-competitive.
\end{proof}

%%%%%%%%%%%%%%%%%%%%%%%%%%%%%%%%%%%%%%%%%%%%%%%%%%%%%%%%%%%%%%%%%%%%%%%%%%%%%%%%%%%%%%%%%%%%%%%%
%%%%%%%%%%%%%%%%%%%%%%%%%%%%%%%%%%%%%%%%%%%%%%%%%%%%%%%%%%%%%%%%%%%%%%%%%%%%%%%%%%%%%%%%%%%%%%%%
%%%%%%%%%%%%%%%%%%%%%%%%%%%%%%%%%%%%%%%%%%%%%%%%%%%%%%%%%%%%%%%%%%%%%%%%%%%%%%%%%%%%%%%%%%%%%%%%

\section{Analysis in the dynamic scenario}
\label{sec:dynamic}

To analyze \DLM in the dynamic scenario, we first 
establish an~offline approximation of \OPT that could be handled using 
our potential functions.

We say that an algorithm is \emph{move-to-front based ($\MTF$-based)} if, in
response to request~$R$, it chooses exactly one of the elements from $R$, brings it
to the list front, and does not perform any further actions. We denote
the class of all such (offline) algorithms by \MTFB. The proof of the following
lemma can be found in the appendix.

\begin{lemma}
\label{lem:mtf_opt}
For any input $\I$, there exists an (offline) algorithm 
$\OFF^* \in \MTFB$, such that $\OFF^*(\I) \leq 4 \cdot \OPT(\I)$.
\end{lemma}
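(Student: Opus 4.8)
The plan is to show that any optimal (or near-optimal) dynamic offline algorithm can be converted into an \MTFB algorithm while losing only a constant factor. The key observation is a standard one in list-update-style arguments: restricting attention to algorithms that, on each request, move \emph{a single requested element} to the front costs only a constant factor compared with arbitrary dynamic reorderings. So I would first take an optimal solution \OPT for \I, with permutations $\pos^*_0, \pos^*_1, \dots$, and argue it suffices to handle the case where \OPT itself is already ``lazy'' in a suitable sense; then I would build $\OFF^*$ request-by-request.

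First I would reduce to the offline Min-Sum Set Cover structure: for each request $R_t$, \OPT pays at least $\min_{z \in R_t} \pos^*_{t-1}(z)$ for access plus whatever reordering it does. I would let $z_t \in R_t$ be the element that \OPT accesses (the one attaining the minimum), and have $\OFF^*$ bring exactly $z_t$ to the front. The access cost of $\OFF^*$ on step $t$ is then the current position of $z_t$ on $\OFF^*$'s list, and its reordering cost is that position minus one, i.e., at most twice its access cost; so it suffices to bound $\OFF^*$'s access cost by $O(1) \cdot \OPT(\I)$. This is now a pure ``move-to-front tracking'' question: how expensive is it to always keep the most-recently-\OPT-accessed element at the front?

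The main step — and the main obstacle — is the combinatorial/amortized comparison between $\OFF^*$'s access costs and \OPT's total cost. I would use a potential/interleaving argument on pairs of elements: for each unordered pair $\{u,v\}$, track whether \OPT and $\OFF^*$ agree on their relative order, and charge the disagreements to \OPT's access and reordering costs. A cleaner route, which I expect the authors take, is to invoke the known result that the offline \MSSC has a $4$-approximation achievable by a solution of this restricted ``one-element-to-front'' form — essentially combining the $4$-approximation for offline \MSSC of Feige, Lovász, Tetali~\cite{FeLoTe04} (or Bar-Noy et al.~\cite{BaBHST98}) with the fact that against such a benchmark one may assume a canonical move-to-front structure. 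Concretely: take the $4$-approximate offline solution, observe it can be assumed to only ever move an accessed element to the front (any other reordering can be deferred or removed without increasing cost, by a standard exchange argument), and read off $\OFF^* \in \MTFB$ with $\OFF^*(\I) \le 4 \cdot \OPT(\I)$.

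I expect the delicate point to be bookkeeping the reordering cost: one must ensure that when $\OFF^*$ fetches $z_t$ to the front, the induced shifts of the other elements do not accumulate across steps beyond what \OPT already pays. The standard fix is the observation that each such fetch costs at most the access cost of that same step (position minus one), so total reordering cost of $\OFF^*$ is at most its total access cost, and both are simultaneously bounded by $O(1) \cdot \OPT(\I)$ via the offline approximation guarantee. Since the paper defers the full proof to the appendix and only needs the stated constant, I would present the reduction to offline \MSSC $4$-approximation as the core, cite~\cite{FeLoTe04,BaBHST98}, and handle the move-to-front canonicalization by the exchange argument sketched above.
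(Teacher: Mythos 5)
Your core construction is the same as the paper's: pick the element $z_t \in R_t$ that \OPT accesses, have $\OFF^*$ move exactly that element to the front, observe that the fetch cost is at most the access cost of that step, and bound the result by a move-to-front-style comparison against \OPT. The paper packages this as a reduction: it forms the singleton instance $\J$ with $R_t' = \{z_t\}$, notes $\OPT(\J) \leq \OPT(\I)$, applies Sleator--Tarjan's $2$-competitiveness of \MTF for list update (doubled to $4$ to account for reordering costs, which list update ignores), and transfers the bound back since $\OFF^*$'s access cost on $R_t$ can only be smaller than \MTF's on $R_t'$. Your first proposed justification --- the pairwise inversion potential between $\OFF^*$'s list and \OPT's list --- is exactly the proof of that cited result inlined, so that branch of your argument is sound and essentially identical in substance.

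However, the route you flag as the ``cleaner'' one you expect the authors to take is a genuine misstep. The $4$-approximation of Feige, Lov\'asz and Tetali (and Bar-Noy et al.) concerns the \emph{static offline} \MSSC: it is a polynomial-time algorithm approximating the best \emph{fixed} permutation's access cost. It says nothing about approximating the \emph{dynamic} optimum $\OPT(\I)$ by an algorithm in \MTFB, and indeed the paper notes that the best fixed permutation can be an $\Omega(n)$ factor worse than the dynamic optimum, so no argument routed through the static benchmark can yield a constant factor here. The coincidence of the constant $4$ (which in the lemma arises as $2 \times 2$ from \MTF's ratio and the reordering-cost doubling) appears to have misled you. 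Likewise, the claimed ``standard exchange argument'' that an arbitrary $4$-approximate solution can be canonicalized into \MTFB form is not something you establish and is not needed: the correct proof builds $\OFF^*$ directly from \OPT's access choices, as in your first branch. Commit to that branch and drop the appeal to the offline approximation results.
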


We now analyze the second stage of a step, where an offline algorithm \OFF from
the class \MTFB reorders its list. 

\begin{lemma}
\label{lem:delta_phi_when_opt_moves}
Assume $\OFF \in \MTFB$. 
Fix any step and consider its second stage, 
where $\OFF$ moves some element $z$ to the list front. 
Then, $\Delta \Phi +\Delta \Psi  = O(r^2) \cdot \Delta \OFF$.
\end{lemma}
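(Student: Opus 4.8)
The plan is to analyze the potential change when $\OFF$ moves an element $z$ from its current position $\pos^*(z)$ to the front of its list. Since $\OFF$ pays $\Delta \OFF = \pos^*(z)$ for this move (actually the access cost was already charged in the first stage, so in the second stage $\OFF$ pays the reordering cost $\pos^*(z)-1 < \pos^*(z)$), I want to show the total potential increase is $O(r^2) \cdot \pos^*(z)$. The move has two effects on permutation $\pos^*$: first, $\pos^*(z)$ drops to $1$; second, every element $w$ that previously preceded $z$ on $\OFF$'s list has $\pos^*(w)$ incremented by $1$. The key observation is that $\pos^*(z)$ decreasing can only \emph{help} $z$'s own potential in a controlled way, while the increments of the other elements' $\pos^*$ values are exactly the situation covered by \autoref{lem:opt_increments_elems}, which says each such increment yields $\Delta \Phi_w \leq 0$ and $\Delta \Psi_w \leq 0$.

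So the proof splits into two parts. First, the elements $w \neq z$ whose $\OFF$-position increases: by \autoref{lem:opt_increments_elems}, each contributes a non-positive change to both $\Phi$ and $\Psi$, so their aggregate contribution is $\leq 0$ and can be dropped. Second, the single element $z$: its $\OFF$-position $\pos^*(z)$ drops from some value to $1$, so $p^*(z)$ decreases (or stays the same). This is the reverse of \autoref{lem:opt_increments_elems}'s hypothesis, so it can only \emph{increase} $\Phi_z$ and $\Psi_z$. I need to bound this increase. When $p^*(z)$ drops, the definition of $\Phi_z$ may switch from the ``unsafe'' branch $\fb \cdot \pos(z) - \fc \cdot b(z)$ to the ``safe'' branch $\fa \cdot b(z)$ — but going toward the safe branch, and since $\fa \cdot b(z) \leq \fa \cdot \pos(z)$ (during/after reordering, up to the $3/2$ factor of \autoref{obs:budget_invariant}) and $\fb \cdot \pos(z) - \fc \cdot b(z) \geq 0$, the increase of $\Phi_z$ is at most roughly $\fa \cdot (3/2) \cdot \pos(z) = O(r) \cdot \pos(z)$. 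More carefully, one compares $\Phi_z$ evaluated with the largest possible $p(z)$-vs-$p^*(z)$ relationship before versus after; in all cases the jump in $\Phi_z$ is $O(r) \cdot \pos(z)$. Similarly $\Psi_z$ can jump from $0$ to at most $2\fb \cdot q(z) < 2\fb \cdot \pos(z) = O(r)\cdot \pos(z)$.

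The remaining — and main — obstacle is relating $\pos(z)$ (the position of $z$ on \DLM's list) to $\pos^*(z)$ (its position on \OFF's list, which equals $\Delta \OFF$). These are not directly comparable: $z$ could be deep in \DLM's list while near the front of \OFF's. The trick is that the potential $\Phi_z$ itself encodes this mismatch. If $z$ is \emph{safe}, i.e. $p(z) \leq p^*(z) + \fk - 1$, then as noted in the text right before \autoref{lem:elem_back_change}, $\pos(z) = O(r) \cdot \pos^*(z)$, so the $O(r)\cdot\pos(z)$ bounds above become $O(r^2) \cdot \pos^*(z) = O(r^2) \cdot \Delta \OFF$, as desired. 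If $z$ is \emph{unsafe}, then $\pos(z)$ can be arbitrarily larger than $\pos^*(z)$ — but in the unsafe regime $\Phi_z = \fb\cdot\pos(z) - \fc\cdot b(z)$, and critically, the ``before'' value of $\Phi_z$ in the unsafe branch is \emph{large and positive} (it equals at least $(\fb - (3/2)\fc)\cdot \pos(z) = \Omega(r)\cdot\pos(z)$ by \autoref{obs:budget_invariant} and \cref{fact:constants}), while after dropping $\pos^*(z)$ to $1$ the element becomes safe and $\Phi_z$ becomes $\fa\cdot b(z) \leq \fa\cdot(3/2)\cdot\pos(z)$. Wait — that's still a \emph{decrease} when $\pos(z)$ is large, so the unsafe case actually gives $\Delta \Phi_z \leq 0$ (plus $\Delta \Psi_z$, which I handle separately and which also only shrinks when $z$ moves to a small $q^*$... actually $\Psi_z$ depends on $q(z)$ not $q^*$, so the only way $\Psi_z$ changes is via the case threshold on $p^*(z)$, and dropping $p^*(z)$ moves $z$ toward the ``$\Psi_z = 0$'' branch, so $\Delta \Psi_z \leq 0$ too). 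Hence the genuinely nontrivial contribution comes only from the safe case, where the $O(r)\cdot\pos(z) = O(r^2)\cdot\pos^*(z)$ estimate closes the argument. I would organize the write-up as: (i) invoke \autoref{lem:opt_increments_elems} to kill all $w \neq z$; (ii) split on whether $z$ was safe or unsafe before the move; (iii) in the safe subcase bound $\Delta\Phi_z + \Delta\Psi_z$ crudely by $O(r)\cdot\pos(z) = O(r)\cdot 2^{\fk}\cdot\pos^*(z) = O(r^2)\cdot\Delta\OFF$; (iv) in the unsafe subcase show $\Delta\Phi_z + \Delta\Psi_z \leq 0$ by comparing the large pre-move unsafe potential with the small post-move safe potential.
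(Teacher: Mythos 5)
Your decomposition is the same as the paper's: dispose of the elements $w\neq z$ via \autoref{lem:opt_increments_elems}, then bound $\Delta\Phi_z+\Delta\Psi_z$ by relating $\pos(z)$ to $\pos^*(z)$ through the safe/unsafe dichotomy. However, you have the direction of that dichotomy reversed, and this breaks your case analysis. Moving $z$ to the front of \OFF's list \emph{decreases} $p^*(z)$ (to $0$), hence \emph{increases} $p(z)-p^*(z)$: the element moves \emph{toward} the unsafe regime, not away from it. So in your ``unsafe before'' case the element does not ``become safe'' with $\Phi_z$ dropping to $\fa\cdot b(z)$; if $p(z)\geq p^*(z)+\fk+1$ initially, it simply stays in the branch $\fb\cdot\pos(z)-\fc\cdot b(z)$, and since neither $\pos(z)$ nor $b(z)$ changes during \OFF's reordering, $\Delta\Phi_z=0$ --- the correct (and easier) argument, but not the one you give. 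Worse, your claimed $\Delta\Phi_z+\Delta\Psi_z\leq 0$ is false in the borderline unsafe case $p(z)=p^*(z)+\fk$: there $\Phi_z$ starts in the branch $\fa\cdot b(z)$ and, once $p^*(z)$ drops, switches \emph{into} the branch $\fb\cdot\pos(z)-\fc\cdot b(z)$, so $\Delta\Phi_z$ can be as large as $\fb\cdot\pos(z)>0$. The lemma still holds there, but only because in this borderline case $\pos(z)<2^{p(z)+1}\leq 2^{\fk+1}\cdot\pos^*(z)$, a fact you would need to invoke and currently do not.

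The same reversal infects your ``safe before'' case: the branch switch is from $\fa\cdot b(z)$ \emph{to} $\fb\cdot\pos(z)-\fc\cdot b(z)$, so the jump in $\Phi_z$ can be up to $\fb\cdot\pos(z)=O(r)\cdot\pos(z)$, not the $\fa\cdot(3/2)\cdot\pos(z)=O(1)\cdot\pos(z)$ you first compute; your hedge that ``in all cases the jump is $O(r)\cdot\pos(z)$'' is the right magnitude and, combined with $\pos(z)=O(r)\cdot\pos^*(z)$, does close this case, but the stated mechanism is wrong. (You did get the $\Psi_z$ direction right: it can jump from $0$ up to $2\fb\cdot q(z)$.) The clean repair --- which is what the paper does --- is to split not on safe/unsafe but on each branch threshold separately: for $\Phi_z$, either $p(z)\geq p^*(z)+\fk+1$ initially (branch unchanged, $\Delta\Phi_z=0$) or $p(z)\leq p^*(z)+\fk$ (then $\pos(z)\leq 2^{\fk+1}\pos^*(z)$ and $\Delta\Phi_z\leq\fb\cdot\pos(z)=O(r^2)\cdot\pos^*(z)$), and analogously for $\Psi_z$ with threshold $\fk$. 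Finally, you should also handle $\pos^*(z)=1$ separately (then $\Delta\OFF=0$ and there is nothing to charge against, but also nothing changes); for $\pos^*(z)\geq 2$ one has $\Delta\OFF=\pos^*(z)-1\geq\pos^*(z)/2$, which is the normalization the paper uses.
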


\begin{proof}
We may assume that initially $\pos^*(z) \geq 2$, as otherwise
there is no change in the list of \OFF and the lemma follows trivially.

Apart from element $z$, the only elements that change their positions are
elements that originally preceded $z$: their positions are incremented. By
\autoref{lem:opt_increments_elems}, the potential change associated with these
elements is non-positive. 

Thus, $\Delta \Phi + \Delta \Psi \leq \Delta \Phi_z + \Delta \Psi_z$.
Element~$z$ is transported by $\OFF$ from position $\pos^*(z)$ to position $1$,
i.e., $\Delta \OFF = \pos^*(z) - 1 \geq \pos^*(z) / 2$ as 
we assumed $\pos^*(z) \geq 2$.
Thus, to show the lemma it suffices to show that 
$\Delta \Phi_z = O(r^2) \cdot \pos^*(z)$ 
and $\Delta \Psi_z = O(r^2) \cdot \pos^*(z)$. 
We bound them separately.

%\begin{itemize}
%\item Initially $p^*(z) \leq p(z) - \fk - 1$.
%In this case, $\Phi_z = \beta \cdot \pos(z) - \gamma \cdot b(z)$ before 
%and after the movement of $z$. Similarly, 
%$\Psi_z = 2\beta \cdot q(z)$ before and after the movement.
%Altogether, in this case $\Delta \Phi_z = \Delta \Psi_z = 0$. 
%\item Initially $p^*(z) \geq p(z) - \fk - 1$.
%\end{itemize}

\begin{itemize}
\item Note that $p^*(z)$ may only decrease. 
If initially $p^*(z) \leq p(z) - \fk - 1$, then 
$\Phi_z = \beta \cdot \pos(z) - \gamma \cdot b(z)$ before 
and after the movement of $z$, and thus 
$\Delta \Phi_z = 0$. 
Otherwise, $p^*(z) \geq p(z) - \fk$,
which implies 
$\pos(z) < 2^{p(z)+1} \leq 2^{\fk+1} \cdot 2^{p^*(z)}
 \leq 2^{\fk+1} \cdot \pos^*(z)$. 
In such a case, 
\[
    \Delta \Phi_z 
    \leq \beta \cdot \pos(z) - \gamma \cdot b(z) - \alpha \cdot b(z)  
    \leq \beta \cdot \pi(z) 
    \leq \beta \cdot 2^{\fk+1} \cdot \pi^*(z) = O(r^2) \cdot \pi^*(z).
\]
\item 
Similarly, if initially $p^*(z) \leq p(z) - \fk$, then 
$\Psi_z = 2\beta \cdot q(z)$ before 
and after the movement of $z$, and thus 
$\Delta \Psi_z = 0$. 
Otherwise, $p^*(z) \geq p(z) - \fk + 1$ and
which implies 
$\pos(z) < 2^{p(z)+1} \leq 2^{\fk} \cdot 2^{p^*(z)}
 \leq 2^{\fk} \cdot \pos^*(z)$.
In such a case
\[
    \Delta \Psi_z 
    \leq 2 \beta \cdot q(z) - 0 
      \leq 2 \beta \cdot \pi(z) 
    \leq 2 \beta \cdot 2^{\fk} \cdot \pi^*(z)
    = O(r^2) \cdot \pi^*(z).
    \qedhere
\]
\end{itemize}
\end{proof}

%%%%%%%%%%%%%%%%%%%%%%%%%%%%%%%%%%%%%%%%%%%%%%%%%%%%%%%%%%%%%%%%%%%%%%%%%%%%%%%%%%%%%%%%%%%%%%%%
%%%%%%%%%%%%%%%%%%%%%%%%%%%%%%%%%%%%%%%%%%%%%%%%%%%%%%%%%%%%%%%%%%%%%%%%%%%%%%%%%%%%%%%%%%%%%%%%
%%%%%%%%%%%%%%%%%%%%%%%%%%%%%%%%%%%%%%%%%%%%%%%%%%%%%%%%%%%%%%%%%%%%%%%%%%%%%%%%%%%%%%%%%%%%%%%%

\begin{theorem} \label{thm:DLMDO}
$\DLM$ is strictly $O(r^2)$-competitive in the dynamic scenario.
\end{theorem}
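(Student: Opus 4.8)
The plan is to assemble \autoref{thm:DLMDO} from the pieces already in place, essentially mirroring the proof of \autoref{thm:DLMSO} but now accounting for the second stage of each step. First I would fix an arbitrary input $\I$ and, invoking \autoref{lem:mtf_opt}, work with the offline algorithm $\OFF^* \in \MTFB$ satisfying $\OFF^*(\I) \leq 4 \cdot \OPT(\I)$. The goal is to show $\DLM(\I) \leq O(r^2) \cdot \OFF^*(\I)$, which combined with \autoref{lem:mtf_opt} gives $\DLM(\I) \leq O(r^2) \cdot \OPT(\I)$, and then to observe that the additive constant is zero, giving \emph{strict} competitiveness.

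Next I would set up the telescoping argument over the two stages. For step $t$, let $\OFF^*_{t,1}(\I)$ and $\OFF^*_{t,2}(\I)$ denote the access cost (first stage) and reordering cost (second stage) of $\OFF^*$, so $\OFF^*_t(\I) = \OFF^*_{t,1}(\I) + \OFF^*_{t,2}(\I)$; all of $\DLM$'s cost in step $t$ is incurred in the first stage. By \autoref{lem:amortized_cost}, the first stage satisfies $\DLM_t(\I) + \Delta_1\Phi + \Delta_1\Psi \leq O(r) \cdot \OFF^*_{t,1}(\I)$, where $\Delta_1$ denotes the potential change during stage one. For the second stage, $\DLM$ does nothing, so $\Delta_2 \DLM = 0$, and $\OFF^*$ moves a single element to the front; \autoref{lem:delta_phi_when_opt_moves} gives $\Delta_2\Phi + \Delta_2\Psi \leq O(r^2) \cdot \OFF^*_{t,2}(\I)$. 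Adding the two stage inequalities for step $t$ yields
\[
    \DLM_t(\I) + \Phi^t + \Psi^t - \Phi^{t-1} - \Psi^{t-1} \leq O(r^2) \cdot \OFF^*_t(\I),
\]
since $O(r) \leq O(r^2)$ and the intermediate potential (after stage one, before stage two) cancels.

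Then I would sum this over all $m$ steps. The potentials telescope, leaving $\DLM(\I) + \Phi^m + \Psi^m - \Phi^0 - \Psi^0 \leq O(r^2) \cdot \OFF^*(\I)$. By \autoref{lem:non_negative} (summed over all elements), $\Phi^m + \Psi^m \geq 0$. Crucially, in the dynamic scenario $\OFF^*$ may be assumed to start from the same initial permutation $\pi_0$ as $\DLM$: indeed $\OFF^*$ arises from \autoref{lem:mtf_opt} as an approximation of $\OPT$, and $\OPT$ itself starts at $\pi_0$. Hence initially $\pos = \pos^*$, all budgets are zero, so $\Phi^0 = \Psi^0 = 0$. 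This gives $\DLM(\I) \leq O(r^2) \cdot \OFF^*(\I) \leq O(r^2) \cdot \OPT(\I)$ with no additive term, i.e., strict $O(r^2)$-competitiveness.

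The only genuinely delicate point — and the one I would double-check — is the claim that $\Phi^0 = \Psi^0 = 0$, which is what upgrades the bound from plain to \emph{strict} competitiveness and is exactly the place where the dynamic scenario differs from the static one (there \OFF\ starts in an arbitrary fixed permutation, forcing a nonzero additive constant). This hinges on \autoref{lem:mtf_opt} delivering an $\OFF^*$ whose initial permutation coincides with $\pi_0$; one should confirm that the construction in that lemma's proof (in the appendix) indeed preserves the starting permutation rather than optimizing it away. Everything else is bookkeeping: matching the $\Delta$-notation of \autoref{lem:amortized_cost} and \autoref{lem:delta_phi_when_opt_moves} to the per-stage decomposition, and noting that $\Delta\DLM = 0$ in stage two so that \autoref{lem:delta_phi_when_opt_moves}'s conclusion about $\Delta\Phi + \Delta\Psi$ alone suffices.
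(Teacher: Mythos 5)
Your proposal is correct and follows essentially the same route as the paper: sum the stage-one guarantee (\autoref{lem:amortized_cost}) and the stage-two guarantee (\autoref{lem:delta_phi_when_opt_moves}) over all steps, telescope the potentials, observe that $\Phi^0 = \Psi^0 = 0$ because in the dynamic scenario $\OFF^*$ starts from the same permutation $\pi_0$ (so $\pos = \pos^*$ and all budgets are zero initially), and then invoke \autoref{lem:mtf_opt}. The only cosmetic difference is that the paper first establishes the bound against an arbitrary $\OFF \in \MTFB$ and then specializes to $\OFF^*$, whereas you fix $\OFF^*$ up front; this changes nothing.
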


\begin{proof}
The argument here is the same as for 
\autoref{thm:DLMSO}, but this time we sum the guarantees provided 
for the first stage of a step (\autoref{lem:amortized_cost})
and for the second stage of a step (\autoref{lem:delta_phi_when_opt_moves}).
This shows that for any offline algorithm $\OFF \in \MTFB$ and any
input $\I$ it holds that
\begin{equation}
\label{eq:dlm_competitive_against_mtf}
    \DLM(\I) \leq O(r^2) \cdot \OFF(\I) + \Phi^0 + \Psi^0 .
\end{equation}
For the dynamic scenario, the 
initial permutations of \DLM and \OFF are equal, and hence
the initial potential $\Phi^0 + \Psi^0$ is zero.
As \eqref{eq:dlm_competitive_against_mtf} holds 
against arbitrary $\OFF \in \MTFB$, it holds also against $\OFF^*$ 
which is the $4$-approximation of \OPT (cf.~\autoref{lem:mtf_opt}). 
This implies that 
\[
    \DLM(\I) \leq O(r^2) \cdot \OFF^*(\I) \leq O(4 \cdot r^2) \cdot \OPT(\I) ,
\]
which concludes the proof.
\end{proof}

%%%%%%%%%%%%%%%%%%%%%%%%%%%%%%%%%%%%%%%%%%%%%%%%%%%%%%%%%%%%%%%%%%%%%%%%%%%%%%%%%%%%%%%%%%%%%%%%

\section{Final remarks}

In this paper, we studied achievable competitive ratios for the online \MSSC
problem. We closed the gaps for deterministic polynomial-time static scenarios
and tighten the gaps for deterministic dynamic scenarios. Still, some intriguing
open questions remain, e.g., the best randomized algorithm for the dynamic
scenario has a competitive ratio of $O(r^2)$, while the lower bound is merely a
constant. 

Another open question concerns a generalization of the MSSC problem where each
set~$R_t$ comes with a~covering requirement $k_t$ and an algorithm is charged
for the positions of the first $k_t$ elements from $R_t$ on the list (see,
e.g.,~\cite{BaBaFT21}). The only online results so far 
are achieved in the easiest, learning scenario~\cite{FoLiPS20}.

\bibliography{references}

\appendix

\section{Tightness of the analysis}
\label{sec:lower}

In this section, we show that our analysis for \DLM is tight by presenting an
input sequence~$\I$ for which $\DLM(\I) \geq \Omega(r^2) \cdot \OPT(\I)$. We
show that such a bound holds even for the parameterized version of \DLM that we
describe below.

Recall that when \DLM processes a request $\{x, y_1, y_2, \dots, y_{s-1} \}$
(where $\pos(x) < \pos(y_i)$), it increases the budgets of all elements $y_i$ by
$\ell/s$ (cf.~\autoref{line:dlma_budget_increase} of \Cref{alg:dlma}). There,
$\ell$ is the initial value of $\pos(x)$ right before the request. If
we changed $s$ to $r$, the analysis of \DLM would be essentially the same, with
the disadvantage that then \DLM would need to know $r$ upfront. 

One may wonder whether a better ratio could be achieved by simply changing the
denominator of $r$ to another value. This motivates the following
parameterization of \DLM, called  $\DLM_c$. For a positive integer $c$, $\DLM_c$
is defined as \DLM, but we replace the assignment in
\autoref{line:dlma_budget_increase} of \Cref{alg:dlma} with
\[
    b(y_i) \leftarrow b(y_i) + \ell /c .
\]
For our lower bound, we make an additional assumption: we assume that if $k > 1$
elements have budgets exceeding their current positions 
(cf.~Lines \ref{line:dlma_last_line_2}--\ref{line:dlma_last_line_3} of \Cref{alg:dlma}),
then $\DLM_c$ fetches them to the first $k$ positions preserving their relative order.

\begin{theorem}
\label{thm:lower_bound}
For any positive integer $c$, the competitive ratio of $\DLM_c$ in the 
dynamic scenario is $\Omega(r^2)$.
\end{theorem}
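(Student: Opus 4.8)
The plan is to design an adversarial input on which $\DLM_c$ repeatedly pays $\Omega(r^2)$ times the cost of a carefully chosen MTF-based offline solution, for \emph{every} fixed $c$. The guiding intuition is the one flagged in the introduction: $\DLM_c$ can be forced to make ``bad choices,'' pushing towards the list tail exactly the elements that the offline optimum keeps near the front. I would work with a universe partitioned into a small ``hot'' block $H$ of roughly $\Theta(r)$ elements (the ones $\OPT$ will keep up front) and a large ``cold'' region, and arrange the request sequence in phases. In each phase, the adversary issues requests of size $s = r$ whose cheapest element $x$ lies at a moderate position $\ell$, while the remaining $s-1$ elements $y_i$ are drawn from $H$. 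By \autoref{line:dlma_budget_increase} (with denominator $c$), each $y_i$ gains budget $\ell/c$, so after $\Theta(c \cdot \pos(y_i)/\ell)$ such requests the $y_i$'s get fetched — but in the meantime $\DLM_c$ has repeatedly paid $\ell$ for accessing $x$, which the adversary keeps re-inserting deep, while $\OPT$, having the single right element of $H$ at position $1$, pays only $O(1)$ per request.

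The key steps, in order, are: (i) fix the phase structure and the exact positions at which the adversary plants the ``access element'' $x$ and chooses which $y_i \in H$ to hit, so that the interplay between $\ell$, $c$, and the budget threshold $\pos(z)$ is pinned down precisely; (ii) show that within a phase the relative order of $H$ on $\DLM_c$'s list is essentially scrambled in a way that keeps the accessed elements at depth $\Theta(r)$ — here the tie-breaking assumption in the theorem statement (fetching a batch of $k$ over-budget elements to the first $k$ positions in their current relative order) is exactly what lets me control where the $y_i$'s land after a batched fetch; (iii) bound $\DLM_c$'s cost per phase from below by $\Omega(r^2)$ times the number of requests, uniformly in $c$ — the point is that making $c$ large only slows the fetches, which makes the wasted access payments \emph{more} numerous, not cheaper, while making $c$ small causes premature fetches that churn the list and again cost $\Omega(r)$ per step over $\Omega(r)$ steps; (iv) exhibit the offline solution: an algorithm in \MTFB that, by moving the ``right'' $H$-element to the front, serves each request at cost $O(1)$, so $\OPT(\I) \leq O(1)\cdot(\text{\#requests})$ (invoking \autoref{lem:mtf_opt} only to phrase the bound against $\OPT$, though here a direct construction is cleaner); (v) let the number of phases grow without bound so the additive constant is absorbed and the ratio $\DLM_c(\I)/\OPT(\I) = \Omega(r^2)$ survives in the limit.

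The main obstacle I anticipate is step (ii)–(iii): proving that the adversary can \emph{maintain} the invariant that the to-be-accessed elements stay at depth $\Theta(r)$ on $\DLM_c$'s list across many requests, simultaneously for all values of $c$. For a single known $c$ one can tune $\ell$ so that budgets creep up at just the right rate, but the theorem demands one universal bound, so the construction must be robust: I would split into two regimes, $c \leq r$ and $c > r$, and in each regime choose the planted positions $\ell$ (and possibly the phase length) as a function of $c$ so that the number of ``wasted'' accesses per unit of offline cost is $\Omega(r^2)$ regardless. Getting the fetch dynamics of the while loop in Lines~\ref{line:dlma_last_line_2}--\ref{line:dlma_last_line_3} to cooperate — in particular ruling out that a cascade of fetches accidentally rescues $\DLM_c$ by pulling the hot elements up cheaply — is where the tie-breaking hypothesis does the real work, and verifying it rigorously will be the technical heart of the argument. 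The remaining steps (the offline upper bound and the limiting argument) are routine.
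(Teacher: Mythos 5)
Your high-level strategy (phases; a small covering set that the offline algorithm keeps at the front; forcing $\DLM_c$ to fetch requested elements from deep positions; tuning the construction to $c$) is the same as the paper's, but the concrete construction you sketch has a genuine quantitative gap and, as stated, would not reach $\Omega(r^2)$. First, your invariants are inconsistent with the target ratio: if the requested elements sit at depth $\Theta(r)$ on $\DLM_c$'s list while \OFF pays $O(1)$ per request, the per-request ratio is only $O(r)$. Since \OFF pays at least $1$ per access, you need $\DLM_c$ to pay $\Omega(r^2)$ per request, i.e., the requested set must lie at depth $\Omega(r^2)$, which already forces $n = \Omega(r^2)$. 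Second, and more fundamentally, a \emph{static} hot block $H$ of $\Theta(r)$ elements that appears in every request cannot stay deep in $\DLM_c$'s list: after $O(c)$ requests its elements accumulate enough budget to be fetched to $\DLM_c$'s front, and they then drift back only one position per request, so the adversary would have to issue $\Omega(n)$ essentially free requests before $H$ is expensive again — destroying the ratio. Your regimes $c\le r$ versus $c>r$ do not address this; the recycling of the "hot" elements is the crux, not the speed of budget growth.

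The paper's resolution, which is absent from your proposal, is to make the hot set a moving target and to scale the universe with $c$: take $n = r(c+r)$, partition the elements into $r$ blocks $A_1,\dots,A_r$ of size $c+r$ requested cyclically, and let every request be exactly the last $r$ elements of $\DLM_c$'s \emph{current} list. A phase of $c+1$ such requests forces $\DLM_c$ to fetch all $c+r$ elements of the current tail block from position about $n-r$ (the $r-1$ non-fetched elements cross the budget threshold only on the $(c+1)$-st request, by the arithmetic $(c+1)(n-r+1)/c \ge n$), costing $(c+r)(n-r)$, while after $r-1$ further phases the block has drifted back to the tail. The offline algorithm keeps every $(r-1)$-th element of each block — a set of size at most $2c+3r$ — at the front, guaranteeing that the fixed window of $r-1$ elements common to all requests of a phase always meets this set; it pays $O(c+r)$ per phase, giving ratio $(n-r)/3 \ge r^2/3$. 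Note also that \OFF here pays $\Theta((c+r)/(c+1))$ per request, not $O(1)$, which is another reason your per-request accounting needs the phase-level (not request-level) comparison. Without the block rotation, the choice $n=r(c+r)$, and the "every $(r-1)$-th element" covering set, your steps (ii)--(iv) cannot be completed.
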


In our construction, we set the number of elements~$n$ to be $n = r^2 + c \cdot
r$. The adversary employs the following play consisting of $c+1$ requests, where
each request contains the last $r$ elements in the current permutation of
$\DLM_c$. We call such play a \emph{phase}.

\begin{lemma}
\label{lem:lower_bound}
Assume that, at some time, the budgets of all elements are zero. 
Let $A$ be the set of the last $c+r$ elements 
on the list of $\DLM_c$. 
After $\DLM_c$ serves the phase:
\begin{itemize}
\item the budgets of all elements are again zero,
\item the first $c+r$ positions contain elements of $A$ shifted cyclically, and
\item all list elements not in $A$ increase their positions by $c+r$. 
\end{itemize}
Furthermore, the cost of $\DLM_c$ in such a phase is at least 
$(c+r) \cdot (n-r)$.
\end{lemma}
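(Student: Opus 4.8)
The plan is to track the dynamics of $\DLM_c$ over the $c+1$ requests of a phase by an induction on the request index, maintaining the invariant that after serving the first $j$ requests, the last $c+r$ positions (among the elements of $A$) have undergone a cyclic shift by exactly $j$, all budgets of elements currently outside the top block are zero, and the elements of $A$ that have already been fetched sit at the front in the stated rotated order. Concretely, let $A = \{a_1, a_2, \dots, a_{c+r}\}$ listed from front to back, so initially $a_1$ is at position $n-c-r+1$ and $a_{c+r}$ is at position $n$. The first request is $R_1 = \{a_{r+1}, \dots, a_{c+r}\}$, the last $r$ elements; here $x = a_{r+1}$ with $\pos(x) = n-r+1$, so $\fetch(x)$ moves $a_{r+1}$ to the front, and each of the remaining $r-1$ elements $a_{r+2},\dots,a_{c+r}$ gets its budget increased by $(n-r+1)/c$. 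I would check that this single budget increment is not yet enough to trigger any further fetch (their positions are all $\geq n-r+2$, while $(n-r+1)/c < n-r+2$ for $c \geq 1$), so after request $1$ only $a_{r+1}$ has moved.

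The key mechanism to nail down is the following: over the course of the phase, element $a_{r+1}$ receives budget increments every time it is a non-minimum element of a request, and I claim it is exactly the element fetched "for free'' by budget overflow at the right moment. Since each request is the last $r$ elements of the current list, and after $j$ requests exactly $j$ elements of $A$ have been pulled to the front, the request $R_{j+1}$ consists of $a_{r+1-j}, \dots$ — the analysis here is that the element with minimum position in $R_{j+1}$ is the next "fresh'' element $a_{r-j}$ (for $j < r$) or a previously-budgeted element once $j \geq r$, and the budgeted tail elements accumulate $\ell/c$ each time until, after $c$ such increments plus being near the front, their budget reaches their position and they are fetched in the while-loop — using precisely the additional tie-breaking assumption that simultaneously-overflowing elements are fetched preserving relative order, so they land as a contiguous rotated block. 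I would make this precise by splitting the phase into the first $r$ requests (each introducing one new tail element via its explicit $\fetch(x)$) and the last $c+1-r$ requests, and arguing the budgets of the $c+r$ special elements telescope to exactly trigger one fetch per step, returning all budgets to zero by the end because every element of $A$ gets fetched exactly once (resetting $b(\cdot)$ to $0$) and no element outside $A$ ever has its budget touched.

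For the cost bound, I would observe that in each of the $c+1$ requests, $\DLM_c$ pays an access cost of at least $n - c - r + 1$ (every requested element lies in the bottom block, which starts at position $n-c-r+1$ before the phase and only moves downward as $A$-elements are pulled forward, so the minimum requested position is always $\geq n-c-r+1 \geq n-c-r$), and additionally pays reordering cost for each $\fetch$: there are $c+r$ fetches total (one per element of $A$), and the $k$-th fetch from the back costs at least... — more cleanly, the total reordering cost equals the total number of adjacent swaps, which is at least the number of elements not in $A$ times the number of times the front block grows, giving $\geq (n - c - r)\cdot$ something. I will instead bound the cost bottom-up: every one of the $c+r$ fetched elements travels from a position $\geq n - c - r + 1$ (at the moment it is first touched it is still in or just above the original bottom block, since the front block has size $< c+r$) down to the front, contributing $\geq n-c-r$ swaps, for a total reordering cost $\geq (c+r)(n-c-r) \geq (c+r)(n-r) - (c+r)c$; combined with access costs this is $\geq (c+r)(n-r)$ after absorbing lower-order terms, or one simply states the bound as $(c+r)\cdot(n-r)$ with the understanding (matching the statement) that it is a lower bound and the extra $-c(c+r)$ slack is harmless. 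The main obstacle is the bookkeeping in the middle paragraph — verifying that the budget increments of the $c+r$ special elements overflow in lockstep to produce exactly one fetch per request and end at zero; everything else is a direct position count.
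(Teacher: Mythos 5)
Your high-level plan (induct over the $c+1$ requests, track budgets, show the while-loop produces the cyclic rotation, then count swaps) matches the paper's intent, but the picture of the dynamics that you propose to verify in the middle paragraph --- which you yourself flag as the crux --- is not what actually happens, so the verification would fail. The true dynamics are simpler and more rigid than what you describe. Writing $A$ from front to back, the element at position $n-r+1$ is the $(c+1)$-st element of $A$, so the first request is $\{a_{c+1},\dots,a_{c+r}\}$ (your $R_1=\{a_{r+1},\dots,a_{c+r}\}$ has $c$ elements, not $r$). After $\fetch$ removes the element at position $n-r+1$, the last $r-1$ elements \emph{never move}: every one of the $c+1$ requests consists of one fresh element $a_{c+1-j}$ arriving at position $n-r+1$ (always the minimum, always explicitly fetched) together with the \emph{same} $r-1$ tail elements sitting at positions $n-r+2,\dots,n$ throughout. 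Consequently: (i) it is never the case that "the minimum element of $R_{j+1}$ is a previously-budgeted element once $j\ge r$"; (ii) only $r-1$ elements ever acquire nonzero budget, not $c+r$; and (iii) there is no "one budget-triggered fetch per step in lockstep" --- after $c$ increments each tail budget equals $c\cdot(n-r+1)/c=n-r+1$, still below its position, and only after the $(c+1)$-st increment does it reach $(c+1)(n-r+1)/c\ge n$, at which point all $r-1$ tail elements overflow \emph{simultaneously} in the final step and are fetched as a block (this is exactly where the tie-breaking assumption is needed). The element you single out as "fetched for free by budget overflow at the right moment" is in fact fetched explicitly in step~1 and never touched again.

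Two smaller points. First, the claim that all budgets return to zero is immediate once the correct dynamics are in place (every element of $A$ is fetched exactly once, and $\fetch$ resets the budget; no other budget is ever touched), so no telescoping argument is needed. Second, for the cost bound you weaken the per-fetch distance to $n-c-r$ and then try to recover $(c+r)(n-r)$ by adding access costs "after absorbing lower-order terms"; this can be made to work numerically with $n=r^2+cr$, but it is unnecessary: every fetch in the phase starts from position at least $n-r+1$ (the explicit fetches from exactly $n-r+1$, the final block of budget fetches from positions $\ge n-r+2$), giving $(c+r)(n-r)$ from reordering cost alone.
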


\begin{proof}
We name elements of set $A$ as $A = \{a_1, a_2, \dots, a_{c+1}, b_1, b_2, \dots
b_{r-1}\}$. Due to each of $c+1$ requests, \ALG fetches an element $a_i$ from
position $n-r+1$ to the list front and increases budgets of all elements $b_j$
by $(n-r+1)/c$. Within the first $c$ requests, these budgets are at most $c
\cdot (n-r+1) / c = n-r+1$, and are thus smaller than the positions of the
corresponding elements $b_j$. However, after the $(c+1)$-th request, the budget
of any element $b_j$ will become equal to 
\[
    (c+1) \cdot \frac{n - r + 1}{c} 
    \geq (c+1) \cdot \frac{n - r}{c} 
    = n + \frac{r^2 - r}{c} 
    \geq n.
\]
Therefore, all elements $b_1, b_2, \dots, b_{r-1}$ are fetched to the list
front, and their budgets are reset to zero. (The budgets of all remaining
elements are always zero within the phase.) Once the phase ends, the list front
contains elements $b_1, b_2, \dots, b_{r-1}, a_1, a_2, \dots, a_{c+1}$, in this
order.

It remains to lower-bound the cost of $\DLM_c$ in a phase. We neglect its access
cost; its reordering cost consists of fetching $c+r$ elements, each from
position $n-r+1$ or further, i.e., its total cost is at least $(c+r) \cdot
(n-r)$.
\end{proof}

\begin{proof}[Proof of \autoref{thm:lower_bound}]
Consider an input sequence $\I$ consisting of $m$ phases, where $m$ is a~large
number. As $n$ is divisible by $c+r$, we can partition all list elements into
$r$ disjoint sets $A_1, A_2, \ldots, A_r$, each of size~$c+r$, that will 
be requested in the cyclic order in the consecutive phases of $\I$.

We bound $\OPT(\I)$ by presenting an offline solution \OFF for $\I$ with a small
cost. Let $A'_i$ contain every $(r-1)$-th element from the original ordering of
$A_i$, i.e., $|A'_i| = \lfloor |A_i| / (r-1) \rfloor = \lfloor (c+r) / (r-1)
\rfloor$. Let $A^* = \bigcup_{i=1}^{r} A'_i$. At the very beginning, \OFF moves
all elements from $A^*$ to the first $|A^*| = r \cdot \lfloor (c+r) / (r-1)
\rfloor \leq 2 c + 3 r$ positions. The cost of this operation becomes negligible
in comparison to the remaining cost for a sufficiently large number $m$ of phases.
Throughout the whole input $\I$, \OFF will keep elements of $A^*$ on the first $|A^*|$
positions, but it will change their relative order as discussed below.

Consider now any phase, in which elements from $A_i$ are requested. At its
beginning, by~\autoref{lem:lower_bound}, elements of $A_i$ are in the cyclically
shifted original order. Hence, $A^*$ contains one element $x^*$ among those that
$A_i$ has now at the last $r-1$ positions of the list of $\DLM_c$. 
These $r-1$ elements are present in all requested sets of a phase. To serve
them, before the phase starts, \OFF fetches $x^*$ to the first position. Then, its
cost within a phase is at most $|A^*| - 1 \leq 2 c + 3 r - 1$ for reordering and
$c+1$ for accesses, i.e., $3 c + 3 r$ in total.

Recall that by \autoref{lem:lower_bound}, the cost of $\DLM_c$ in a phase 
is at least $(c+r) \cdot (n-r)$. 
Hence, for each phase, the $\DLM_c$-to-\OFF ratio is 
\begin{align*}
    \frac{\DLM_c}{\OPT} 
    & \geq \frac{(c+r) \cdot (n-r)}{3 c + 3r} =  \frac{n-r}{3} \geq \frac{r^2}{3}.
\end{align*}
Summing over all phases yields the desired lower bound.
\end{proof}

%%%%%%%%%%%%%%%%%%%%%%%%%%%%%%%%%%%%%%%%%%%%%%%%%%%%%%
%%%%%%%%%%%%%%%%%%%%%%%%%%%%%%%%%%%%%%%%%%%%%%%%%%%%%%
%%%%%%%%%%%%%%%%%%%%%%%%%%%%%%%%%%%%%%%%%%%%%%%%%%%%%%

\section{Omitted proofs}

\begin{proof}[Proof of \autoref{lem:mtf_opt}]
Based on the actions of $\OPT$ on $\I = (\pi_0, R_1, \ldots, R_m)$, we may
create an input $\J = (\pi_0, R_1', \ldots, R_m')$ where $R_i'$ is a singleton
set containing exactly the element from~$R_i$ that $\OPT$ has nearest to the
list front.

Clearly, $\OPT(\J) \leq \OPT(\I)$. Note that $\J$ is an instance of the list
update problem. Thus, if we take an algorithm $\MTF$ for the list update problem
(which brings the requested element to the list front), then it holds that
$\MTF(\J) \leq 4 \cdot \OPT(\J)$~\cite{SleTar85}. (The result of~\cite{SleTar85}
shows the competitive ratio of $2$, but the list update model ignores reordering
costs. However, for \MTF the reordering costs are equal to access costs
minus~$1$ and hence, taking them into account at most doubles the competitive
ratio.)

Let now $\OFF^*$ be an offline algorithm that, on input $\I$, performs the same
list reordering as $\MTF(\J)$. Clearly, $\OFF^* \in \MTFB$. While the
reordering cost of $\OFF^*$ on $\I$ coincide with that of $\MTF$ on $\J$,
its access cost can be only smaller. Thus, $\OFF^*(\I) \leq \MTF(\J)$.

Summing up, we obtain $\OFF^*(\I) \leq \MTF(\J) \leq  4 \cdot \OPT(\J) \leq 4
\cdot \OPT(\I)$, which concludes the proof. 
\end{proof}

\end{document}